\newtheorem{satz}{Theorem}[section]
\newtheorem{hsatz}[satz]{Proposition}
\newtheorem{lem}[satz]{Lemma}
\newtheorem{defn}[satz]{Definition}
\theoremstyle{remark}
\newtheorem{bem}[satz]{Remark}
\newtheorem{bsp}[satz]{Example}
\theoremstyle{definition}
\newcommand{\kz}{\mathbb C}
\newcommand{\rz}{\mathbb R}
\newcommand{\nz}{\mathbb N}
\newcommand{\K}{\mathbb K}
\newcommand{\fa}{\text{ } \forall \text{ }}
\newcommand{\dom}{\operatorname{D}}
\newcommand{\hil}{\mathcal H}
\newcommand{\gil}{\mathcal G}
\newcommand{\with}{\text{ with }}
\newcommand{\esgibt}{\exists \text{ }}
\newcommand{\ep}{\varepsilon}
\newcommand{\form}{\mathfrak h} 
\newcommand{\tr}[1]{\operatorname{tr}_{ #1 }} 
\newcommand{\str}[1]{\operatorname{str}_{#1}} 
\newcommand{\sgn}{\operatorname{sgn}} 
\newcommand{\komp}{_{\mathrm{komp}}}
\newcommand{\fin}{_{\mathrm{fin}}}
\newcommand{\ukl}{u}
\newcommand{\rrv}{\ell^2(E_v;\kz)}  
\begin{document}
\newcommand{\dx}{\,dx}

\title{Unbounded quantum graphs with unbounded boundary conditions}
\author{Daniel Lenz, Carsten Schubert and Ivan Veseli\'c}

\maketitle

\begin{abstract}
We consider metric graphs with a uniform lower bound on the edge lengths but no further restrictions. 
We discuss how to describe  every local self-adjoint Laplace operator on such graphs  by boundary conditions in the vertices given  by  projections and  self-adjoint operators.  We then  characterize  the  lower bounded self-adjoint Laplacians   and determine their associated quadratic form in terms of the operator families encoding the boundary conditions. 

\vspace{8pt}

\noindent
MSC 2010: 47B25, 35J05, 81Q35
\vspace{2pt}

\noindent
Keywords: self-adjoint operators, quantum graphs, boundary conditions, Laplacian
\end{abstract}

\section*{Introduction} 
Quantum graphs, i.\,e. metric graphs together with a differential operator,  have attracted a lot of attention in recent years (see e.\,g. the conference proceedings \cite{BCFK,AGA} and the references therein).  On the one hand this is due to their relevance in physics as models for nanostructures. On the other hand this is due to their mathematical features stemming from their role as an intermediate structure between discrete and continuum models.

The basis for the  investigation of quantum graphs is the definition of a suitable self-adjoint operator on the underlying structure. Accordingly, there has been quite some work devoted to defining such operators.

In their influential paper Kostrykin and Schrader studied Laplacians on a metric star graph with finitely many edges using Lagrangian subspaces \cite{KostrykinS-99b}. This gives all self-adjoint versions of the negative Laplace operator for this specific graph (see work of Harmer \cite{Harmer-00} for related material and work of Carlson \cite{Carlson-98} for earlier consideration in a similar direction as well).
More general graphs were then treated subsequently in various works.
Cheon et al \cite{CheonET} consider boundary conditions from a different perspective. Bruening et al \cite{BrueningGP-08}  deal with quantum graphs in the wider context of the theory of boundary triples (see work of Post \cite{Post} for related material as well). In fact, boundary triplets and Lagrangian subspaces  can be used to describe all self-adjoint Laplacians---even for infinite metric graphs (see appendix or \cite{SSVW}).

An alternative approach to quite general metric graphs is developed by Kuchment in  \cite{Kuchment-04}. This work gives a   description of the self-adjoint Laplacians via some new operators $(P,L)$ encoding the  boundary conditions.
With the help of these new operators the  Laplacian is studied  using its   associated quadratic forms. This   work  is not restricted to star graphs but rather deals with general graphs. Still, it  imposes some restrictions   on the underlying graph structure. In particular, it requires finiteness of all  vertex degrees   as well as a lower bound on the edge lengths. Moreover, the use of form methods means that only operators can be tackled which are bounded below. In fact, the work assumes that all boundary conditions are bounded in the sense that the operators $L$ are bounded. 
To get rid of these restrictions is the starting point of this paper. 

More specifically, we deal with general metric graphs and impose neither a condition of finiteness of vertex degree nor a condition of semiboundedness of the operator. We do, however, keep the assumption of a uniform lower bound on the edge length. In this setting our aim is

\begin{itemize}
\item to give an explicit description of \textit{all} (local) self-adjoint Laplacians on a metric graph via the operators $(P,L)$ used by Kuchment,

\item to  characterize those boundary conditions yielding lower bounded operators. 
\end{itemize}
In order to achieve this goal we will have to deal with  operators  $L$ which are not bounded. 
 
 Along the way we also give a sufficient condition for essential self-adjointness.

\smallskip

The results of this paper have proven  useful in the study of  random Schr\"odinger operator; they are used to obtain   Combes-Thomes estimate and a  geometric resolvent inequality in  \cite{diss_11}.
 The whole paper is based on the dissertation \cite{diss_11} of one of the authors.

\smallskip



The paper is organized as follows: In section \ref{Metric} we introduce metric graphs and discuss the necessary background. Section \ref{Unbounded} then provides the first main result, theorem 2.2,  describing all self-adjoint operators on a given metric graph. Moreover, it contains the result on essential self-adjointness.  A discussion of boundary conditions at a vertex with unbounded degree is given in section \ref{abs_dv_unendl}. The characterization of those boundary conditions giving lower bounded operators can be found in section  \ref{Quadratic}. Some remarks on the requirement of a lower bound on the edge lengths is given in section \ref{ab:KGN}. The appendix provides the connection of our work to the approach via Lagrangian subspaces.

\bigskip

\medskip

\textbf{Acknowledgments.} Financial support from  the German Research Foundation (DFG) is gratefully acknowledged. The authors would also like to thank Christian Seifert  and   Michael Gruber for illuminating discussions.

\section{Metric graphs} \label{Metric}  
The biggest difference between combinatoric and metric graphs is the definition of functions on the graph.
Functions on metric graphs are defined on the edges, which are equipped with a length and can be seen as the corresponding interval.

\begin{defn}
A metric graph is a tuple $\Gamma=(E,V,l,i,j)$ consisting of countable sets of edges $E$ and vertices $V$, a length function $l:E\to (0,\infty]$ giving each edge a length and functions giving each edge a starting point and each finite edge an end point $i:E\to V$, $j:\{ e\in E \with l(e)<\infty \}\to V$.
\end{defn}
Note that we allow for loops and multiple edges in this definition.

The interval $I_e:=(0,l(e))$ will be identified with each edge $e$.
With these intervals we define the space
\begin{equation*}
X_E:= \bigcup\limits_{e \in E}\{ e\}\times I_e
\end{equation*}
and denote functions $f:X_E \to \kz$ by $f_e(t):=f(e,t)$.
The underlying Hilbert space is
\begin{equation*}
L^2(X_E):=
\{f=(f_e)_{e\in E} \with f_e\in L^2(I_e), \sum\limits_{e\in E} \|f_e \|^2_{L^2(I_e)}<\infty  \}
\end{equation*}
with the corresponding Sobolev spaces
\begin{equation*}
W^{1,2} (X_E) := \bigoplus\limits_{e\in E } W^{1,2}(I_e),\qquad W^{2,2}(X_E):= \bigoplus\limits_{e\in E } W^{2,2}(I_e).
\end{equation*}
These spaces are sometimes called decoupled Sobolev spaces, as functions don't need to be continuous in the vertex. We will need this freedom (from continuity assumptions)  to describe all possible boundary conditions.
\begin{defn}
If a vertex $v$ is a starting or end point of an edge $e$, then $v$ and $e$ are called incident. We will denote this relation by $e\sim v$.

\smallskip

Without  loss of generality  we will assume throughout that  there are no isolated vertices and treat connected graphs only.

\smallskip

Let $E_v:=\{(e,0) \with v=i(e) \} \cup \{(e,l(e)) \with v=j(e)  \}$ be the set of outgoing and incoming edges incident to $v$. The degree of a vertex is defined by
\begin{equation*}
d_v:= |\{ (e,0) \with v= i(e)\} \cup \{(e,l(e)) \with v=j(e) \}| =|E_v|.
\end{equation*}
\end{defn}
We do not require finiteness of the $d_v$. If all $d_v$, $v\in V$, are finite, we say that the graph has bounded vertex degree.

From the Sobolev imbedding theorem (e.\,g. theorem 4.12 in \cite{AdamsF}) we know that each function in $W^{j+1,2}(0,l)$ has a representative in $C^j(0,l)$ and can be continuously extended to the boundary. Thus we can define the limits
\begin{align*}
f (0) &:=\lim_{t\to 0} f(t) &      f (l(e)) &:=\lim_{t\to l(e)} f (t) &  &\text{for } f\in W^{1,2}(0,l) \text{ and}\\
f' (0) &:=\lim_{t\to 0} f' (t)  &  f' (l(e))&:=\lim_{t\to l(e)} f' (t) &  &\text{for }f \in W^{2,2}(0,l).
\end{align*}
The following result  gives an imbedding of the boundary values by the Sobolev norm for a given function $f\in W^{1,2}(0,l)$.
\begin{lem}
\label{satz:Sobolev}
For each function $f\in  W^{1,2}(0,l)$ and each $a$ with $0<a\leq l$ the inequality
\begin{align}
\label{Sobolev} &|f(0)|^2 \leq \frac{2}{a} \|f\|^2_{L^2 (0,l)} + a \|f'\|^2_{L^2 (0,l)}
\end{align}
holds.
\end{lem}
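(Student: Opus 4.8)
The plan is to establish the inequality by reducing to the subinterval $(0,a)$ and representing $|f(0)|^2$ through a boundary term that I can handle by integration by parts. Since $a\leq l$, the restriction of $f$ to $(0,a)$ again lies in $W^{1,2}(0,a)$, and by the Sobolev embedding quoted just above the lemma it has a representative that is continuous (indeed absolutely continuous) up to the endpoints, so $f(0)$ and all the integrals below are well defined.

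First I would introduce the linear weight $\rho(t):=1-t/a$ on $[0,a]$, which satisfies $\rho(0)=1$, $\rho(a)=0$ and $\rho'(t)=-1/a$. Applying the fundamental theorem of calculus to the product $|f|^2\rho$ --- legitimate because the representative of $f$ is absolutely continuous and $\tfrac{d}{dt}|f(t)|^2=2\operatorname{Re}\bigl(\overline{f(t)}\,f'(t)\bigr)$ --- and using that the boundary contribution at $t=a$ vanishes since $\rho(a)=0$, I obtain
\[ |f(0)|^2=-\int_0^a\Bigl(2\operatorname{Re}\bigl(\overline{f}\,f'\bigr)\,\rho+|f|^2\rho'\Bigr)\,dt=-2\int_0^a\operatorname{Re}\bigl(\overline{f}\,f'\bigr)\rho\,dt+\frac1a\int_0^a|f|^2\,dt. \]

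Next I would estimate the two terms separately. Since $0\leq\rho\leq1$ on $[0,a]$, the first integral is bounded in absolute value by $2\int_0^a|f|\,|f'|\,dt$, to which Young's inequality in the form $2|f|\,|f'|\leq\frac1a|f|^2+a|f'|^2$ applies, giving the bound $\frac1a\int_0^a|f|^2\,dt+a\int_0^a|f'|^2\,dt$. Adding the remaining term $\frac1a\int_0^a|f|^2\,dt$ yields $|f(0)|^2\leq\frac2a\int_0^a|f|^2\,dt+a\int_0^a|f'|^2\,dt$, and enlarging the domain of integration from $(0,a)$ to $(0,l)$, which only increases the right-hand side, produces exactly \eqref{Sobolev}.

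The computation itself is routine; the only points requiring care are the treatment of complex-valued $f$ (so that the derivative of $|f|^2$ carries a real part and no spurious phase factor appears) and the justification of the integration by parts for a merely $W^{1,2}$ function, both of which are secured by the absolutely continuous representative from the Sobolev embedding. The particular choice of the weight $\rho$ is what delivers the sharp constants $2/a$ and $a$; representing $f(0)$ by the plain fundamental theorem of calculus and averaging would also work but with worse constants.
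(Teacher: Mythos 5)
Your proof is correct. Note that the paper itself contains no argument for this lemma: it simply states that the assertion and its proof appear as lemma~8 in Kuchment's 2004 paper, so the relevant comparison is with that standard proof. The proof there proceeds by applying the fundamental theorem of calculus to $|f|^2$ on $(0,x)$, which gives $|f(0)|^2 \leq |f(x)|^2 + 2\int_0^a |f|\,|f'|\,dt$, then averaging over $x\in(0,a)$ and finishing with Young's inequality. Your weighted integration by parts is this very argument in disguise: averaging the term $\int_0^x \bigl(|f|^2\bigr)'\,dt$ over $x\in(0,a)$ and applying Fubini produces exactly your weight $\rho(t)=1-t/a$, so the two routes yield the same identity and the same constants $2/a$ and $a$. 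Your handling of the delicate points is also sound: the absolutely continuous representative justifies the product rule and the fundamental theorem of calculus for $|f|^2\rho$, the real part correctly accounts for complex-valued $f$, and enlarging the integration domain from $(0,a)$ to $(0,l)$ only increases the nonnegative right-hand side. One minor remark: your closing comment is accurate only for the crude variant that applies the fundamental theorem of calculus to $f$ itself (which indeed loses a factor in the $\|f'\|^2$ term); the averaging argument applied to $|f|^2$, as in the cited reference, attains the same constants as your weighted version.
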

The assertion and its proof can be found as lemma 8 in \cite{Kuchment-04}.

This lemma gives the finiteness of many useful quantities. This is particularly useful if the lemma is  applicable uniformly on the whole graph, i.\,e. if there is a uniform lower bound on the edge lengths. If such a  bound is missing there are only few results---see section \ref{ab:KGN}.  

\smallskip

For this reason we will (mostly) assume that our graphs are of   of bounded geometry i.\,e. satisfy
\begin{equation}
\label{geom:u}\tag{geom:\ensuremath{\ukl}}
\esgibt \ukl>0 \text{ with } l(e) \geq \ukl \text{ for all } e\in E.
\end{equation}

\begin{defn}
By $\tr{}(f)$ we define the trace of a function $f\in W^{1,2}(X_E)$ to be the vector of all boundary values of $f$ and $\tr{v}(f)$ its restriction to all beginnings/ends of edges incident to $v$:
\begin{equation*}
\tr{}(f)=\left(\left( (f_e(t)\right)_{(e,t)\in E_v}\right)_{v\in V}, \qquad \tr{v}(f):=(f_e(t))_{(e,t)\in E_v}.
\end{equation*}
Analogue we define the signed trace
\begin{equation*}
\str{}(f)=\left(\left( (\sgn(e,t)\,f_e(t)\right)_{(e,t)\in E_v}\right)_{v\in V}, \qquad \str{v}(f):=\left(\sgn(e,t)\,f_e(t)\right)_{(e,t)\in E_v},
\end{equation*}
where $\sgn(e,t)=1$ for $t=0$ and $\sgn(e,t)=-1$ for $t=l(e)$.
\end{defn}
If we look at $\str{}(f')$ the minus sign at the derivatives of the end points give the so called ingoing derivatives (where ingoing refers to the edges). Hence if we change the direction of one (or all) edges, the vector $\str{}(f)$ stays the same.

If there exists a minimum edge length $u_v$ for all edges incident to $v$, \eqref{Sobolev} gives a bound of the $\ell^2$-norm of $\tr{v}(f)$ by the Sobolev-norm of $f$ which yields in particular $\tr{v}(f),\str{v}(f')\in \rrv$. The same holds for the whole graph.

For the study of minimal and maximal operators and their extensions resp. restrictions we define the Sobolev spaces with vanishing boundary values. The space $W^{k,p}_0(X_E)$ is the closure of $C^{\infty}_0(X_E)=\left(\prod\limits_{e\in E}C^\infty_0(I_e)\right) \cap W^{k,p}(X_E)$ in $W^{k,p}(X_E)$. Thus we get by Sobolev embedding (see e.\,g.  lemma \ref{satz:Sobolev})
\begin{align*}
W^{1,2}_0(X_E)&=\{f\in W^{1,2}(X_E) \with \tr{v}(f) =0 \text{ for all }v\in V\},\\
W^{2,2}_0(X_E)&=\{f\in W^{2,2}(X_E) \with \tr{v}(f)=\str{v}(f')=0 \text{ for all }v\in V\}.
\end{align*}

The Laplace operator $\Delta$ defined on $W^{2,2}_0(X_E) $ is not self-adjoint, but symmetric and its adjoint is defined on $W^{2,2}(X_E)$.
With the help of the boundary vectors $\tr{v}(f)$ and $\str{v}(f')$ we can define boundary conditions to make the Laplace operator self-adjoint.

\begin{defn}
A metric graph $\Gamma$ with a self-adjoint differential operator is called Quantum graph.
\end{defn}

\begin{defn}
Let $\Gamma$ be a metric graph with a uniform bound of the edge lengths from below. A boundary condition \eqref{RB:PL} consists of a pair $(P,L)$ satisfying the following conditions:  $P=(P_v)_{v\in V}$ is a family of orthogonal projections $P_v:\rrv \longrightarrow \rrv$ on closed subspaces of $\rrv$ and $L= ((L_v,\dom(L_v)))_{v\in V}$ a family of self-adjoint operators
\begin{align*}
L_v :\dom(L_v)  \longrightarrow (1-P_v)\left(\rrv\right) \qquad \text{with }\dom(L_v) \subset (1-P_v)\left(\rrv\right).
\end{align*}

The negative Laplacian with boundary conditions of the form \eqref{RB:PL} is defined by:
\begin{align}
\notag &\hspace{1.1em} H^{P,L}f=-f'',\\
\tag{BC:P,L}&\begin{aligned}
\dom(H^{P,L})&=\{ f\in W^{2,2}(X_E) \with \forall\  v\in V: \tr{v}(f)\in\dom(L_v),\\
& \hspace{6.2cm} L_v \tr{v}(f)=(1-P_v)\str{v}(f') \}.
\end{aligned}
\end{align}
\end{defn}

A few comments on the definition are in order: 
As the first part of the two restrictions in the boundary condition gives automatically $(1-P_v)\tr{v}(f) = \tr{v}(f) $, we will read the restrictions successively and omit the projection in the second part. Thus we will consequently write $L_v \tr{v}(f)$ instead of $L_v(1-P_v)\tr{v}(f)$ which is commonly used in the literature.

The positive part $L_v^+$ of $L_v$ is defined by
\begin{equation*}
L_v^+x:=L_v P_{[0,\infty)}(L_v) x, \qquad \text{for all  $ x$ with $P_{[0,\infty)} x \in \dom(L_v)$},
\end{equation*}
where $P_{[0,\infty)}$ denotes the spectral projection on the interval $[0,\infty)$.
Analogously we define the negative part $L_v^-$.
Then $L_v^+$ and $-L_v^-$ are non-negative self-adjoint operators acting from $\dom(L_v)$ to $\dom(L_v)$ with the following decomposition:
\begin{equation*}
L_vx=L_v^+x+L_v^-x \text{ for all }x\in \dom(L_v).
\end{equation*}

In our later discussion of lower bounds for $H^{P,L}$ we will need the following property of the family $(L_v)$:
\begin{equation}
\label{S}\tag{BC:\ensuremath{S}}
\esgibt S>0 \text{ with } \langle L^-_v x,x\rangle \geq -S\langle x,x\rangle \text{ for all } x \in \dom(L_v) \text{ and }v\in V.
\end{equation}
\begin{defn}
Let $\Gamma$ be a metric graph and boundary conditions of the form \eqref{RB:PL} be given, which suffice \eqref{S}.
We will call those boundary conditions of the form \eqref{RB:PLS}.
\end{defn}

\begin{bem}
There are different versions of parametrization of boundary conditions and associated properties of metric graphs in the literature
\begin{enumerate}
\item
Kostrykin and Schrader proved in 1999 in \cite{KostrykinS-99b}, that on a star graph with $n$ infinite edges
given two $n\times n$ matrices $A$ and $B$ with $\operatorname{rank}(A,B)=n$ the negative Laplace-operator
$H^{(A,B)}$ with domain $D(H^{(A,B)})=\{f\in W^{2,2}(X_E) \with A\tr{v}(f)+B\str{v}(f')=0  \} $ and $H^{(A,B)}f=-f'' $
is self-adjoint iff $AB^*$ is self-adjoint.
This parametrization coincides with Lagrangian subspaces.
\item
Kuchment proved 2004 in \cite{Kuchment-04} that the boundary conditions from 1. for finite vertex degree can be rewritten in the form \eqref{RB:PL}.
Here $P $ denotes the orthogonal projection on $\ker B$ and $L:(1-P)\left(\kz^{E_v}\right)\to (1-P)\left(\kz^{E_v}\right) $ with $L=(Q^*B(1-P))^{-1}A (1-P)^*$ is a self-adjoint operator, where $Q$ is the orthogonal projection on the image of $B$.

In this work it is also  shown, that the operator
\begin{align*}
H^{P,L}f&=-f''\\
\dom(H^{P,L})&=\{ f\in W^{2,2}(X_E) \with \forall \ v\in V: P_v\tr{v}(f)=0, \\
&\hspace{4.5cm} L_v\tr{v}(f)+(1-P_v)\str{v}(f')=0\}
\end{align*}is self-adjoint, if the metric graph $\Gamma$ and the boundary conditions \eqref{RB:PL} satisfy the following three conditions:
\begin{enumerate}
\item
The edge lengths are uniformly bounded from below
${l(e)\geq \ukl>0}$ for all $e\in E$,
\item the vertex degree is bounded: $d_v<\infty$  for all $v\in V$,
\item and
the norms of the operators $L_v$ from the boundary condition are uniformly bounded by  $\|L_v\| \leq S$.
\end{enumerate}

We want to comment that our definition of the operator $L_v$ is equivalent to $-L_v$ in Kuchment's works, which we find more convenient (see e.\,g. theorem \ref{satz_2}).
\item
Let $\Gamma$ be a metric graph with \eqref{geom:u}. Then we conclude from the Sobolev theorem for $h\in W^{1,2}(X_E)$ and each $\ep$ with $0<\ep\leq \ukl$:
\begin{equation}
\label{gl_Sobolev_Kante}
|h_e(0)|^2 \leq \frac{2}{\ep} \|h_e\|^2_{L^2 (I_e)} + \ep \|h'_e\|^2_{L^2 (I_e)}.
\end{equation}
Summing over all beginnings and ends of edges $e$ with $(e,t)\in E_v$ we get
\begin{equation}
\label{gl_Sobolev_Knoten}
|\tr{v}(h) |^2=
\sum\limits_{(e,t)\in E_v} |h_e(t)|^2
\leq 2 \cdot \sum\limits_{e \sim v}\left( \frac{2}{\ep} \|h_e\|^2_{L^2 (I_e)} + \ep \|h_e'\|^2_{L^2 (I_e)}\right).
\end{equation}
Summing over all vertices results in:
\begin{equation}
\label{gl_Sobolev_Graph}
\sum\limits_{v\in V} |\tr{v}(h)|^2=
\sum\limits_{v \in V} \sum\limits_{(e,t)\in E_v} |h_e(t)|^2
\leq 2 \cdot \sum\limits_{e \in E} \left( \frac{2}{\ep} \|h_e\|^2_{L^2 (I_e)} + \ep \|h_e'\|^2_{L^2 (I_e)}\right).
\end{equation}
\end{enumerate}
\end{bem}

\begin{bem}
\label{bem_rrv}
Let $\Gamma$ be a metric graph with a lower bound of the edge lengths in each vertex separately, i.\,e.
$ \exists \ u_v>0$ with $l(e)\geq u_v $ for all edges $e$ incident to $v$. Then $\rrv$ is the appropriate space, in the sense that
it holds:
\begin{enumerate}
\item[(i)]
 $\tr{v}(f)\in \rrv $  for all $v\in V$ and all $f\in W^{1,2}(X_E)$. (So,
in particular $f\in W^{2,2}(X_E)\Rightarrow \str{v}(f')\in \rrv  $.)
\item[(ii)]
For all $x$, $y \in \rrv$ there exists a function $f\in W^{2,2}(X_E) $ with $\tr{v}(f)=x $, $\str{v}(f')=y $ and $f$ is supported in a small neighborhood containing only one vertex (v). The mapping $(x,y)\mapsto f$, which maps $\rrv \times \rrv$ into $W^{2,2}(X_E)$, is continuous.
\end{enumerate}
\end{bem}

\begin{proof}
\begin{enumerate}
\item[(i)] holds because of \eqref{gl_Sobolev_Knoten}.

\item[(ii)]
For each $x$, $y\in \rrv$ we can construct a function $f$:

For each end $(e,t)\in E_v$ we define the function $f_e$ on an interval with length $\frac{u_v}{2}$ as a polynomial of degree three, s.\,t. at $t$ the boundary value and derivative equals $x_{(e,t)} $ and $y_{(e,t)}$ and on the other side both values vanish. On all other edges and part of edges we continue with zero.

Then the norm $\|f \|_{W^{2,2}(X_E)} $ is bounded by $c(u_v)\,(\|x\|_{\rrv}+\|y\|_{\rrv})$.\qedhere
\end{enumerate}
\end{proof}

\begin{bem}
\label{bem_rrv2}
If we have a metric graph with a uniform lower bound of the edge lengths, $\inf\limits_{e\in E} l(e)\geq \ukl >0$, the same argument as in the last remark yields:
\begin{itemize}
\item[(i)]
The mappings $\tr{},\str{}:W^{1,2}(X_E)\to \bigoplus\limits_{v\in V} \rrv$  are well defined, linear and surjective.
\item[(ii)]
For each $x$, $y\in \bigoplus\limits_{v\in V}\rrv$ there exists a function $f\in W^{2,2}(X_E)$, such that $\tr{}(f)=x$ and $\str{}(f')=y$.
Thus the mapping $f\mapsto (\tr{}(f),\str{}(f'))$ is surjective onto $\bigoplus_{v\in V}\rrv \times  \bigoplus_{v \in V} \rrv $.
Again $(x,y)\mapsto f$ is continuous.
\end{itemize}
\end{bem}

\section{Unbounded boundary conditions} \label{Unbounded}
In this section we show---by direct calculation---that the negative Laplacian with boundary conditions of the form \eqref{RB:PL} on a metric graph with \eqref{geom:u} is self-adjoint.

An important proposition is the following:
\begin{hsatz}
\label{hsatz_0}
Let $\Gamma$ be a metric graph with \eqref{geom:u}.
For all $f$, $g\in W^{2,2}(X_E)$ the equality
\begin{equation*}
\langle f,-g''\rangle-\langle -f'',g \rangle
=\sum\limits_{v\in V} \langle \tr{v}(f),\str{v}(g') \rangle -\sum\limits_{v\in V}\langle \str{v}(f') ,\tr{v}(g) \rangle.
\end{equation*}
holds, where the sums are absolutely convergent.
\end{hsatz}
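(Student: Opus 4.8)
The plan is to prove the identity edge by edge via integration by parts and then reassemble the resulting boundary terms vertex by vertex; the only genuine difficulty is justifying the rearrangement of the (a priori only informally organised) sum of endpoint contributions, which is exactly where the hypothesis \eqref{geom:u} enters.

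First I would fix a single edge $e$ and integrate by parts twice on $I_e$. Since $f_e,g_e\in W^{2,2}(I_e)$ possess well-defined boundary values of the function and of its derivative by the Sobolev embedding (lemma \ref{satz:Sobolev}), one obtains for a finite edge
\[
\langle f_e,-g_e''\rangle_{L^2(I_e)}-\langle -f_e'',g_e\rangle_{L^2(I_e)}=\big[\,\overline{f_e'}\,g_e-\overline{f_e}\,g_e'\,\big]_0^{l(e)},
\]
while for an edge of infinite length the boundary contribution at $+\infty$ drops out, because functions in $W^{1,2}(0,\infty)$ tend to $0$ at infinity; this is consistent with $j$ being defined only on finite edges, so that $E_v$ carries precisely the surviving endpoints. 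Using the sign convention $\sgn(e,0)=1$, $\sgn(e,l(e))=-1$, the boundary term of edge $e$ can be rewritten as $\sum_{(e,t)}\sgn(e,t)\big(\overline{f_e(t)}\,g_e'(t)-\overline{f_e'(t)}\,g_e(t)\big)$, where the sum runs over the endpoints of $e$.

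Next I would regroup these contributions by vertex. Every endpoint $(e,t)$ is incident to exactly one vertex, and the sets $E_v$ partition the collection of all endpoints. Recognising $\sgn(e,t)\,g_e'(t)$ as the $(e,t)$-component of $\str{v}(g')$ and $g_e(t)$, $f_e(t)$ as the corresponding components of $\tr{v}(g)$, $\tr{v}(f)$, the total contribution of the endpoints lying in a single $E_v$ collapses to $\langle \tr{v}(f),\str{v}(g')\rangle-\langle \str{v}(f'),\tr{v}(g)\rangle$ in $\rrv$. Summing over $v$ then reproduces the asserted right-hand side, \emph{provided} the two ways of organising the endpoint sum (by edge and by vertex) agree.

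Finally I must secure the absolute convergence that legitimises both summing the edgewise identities and interchanging these two groupings; this is the main obstacle. On the left, Cauchy-Schwarz gives $\sum_{e}|\langle f_e,-g_e''\rangle|\le \|f\|\,\|g''\|$, so $\sum_e\langle f_e,-g_e''\rangle=\langle f,-g''\rangle$ and likewise for the other term, both series converging absolutely. On the right, \eqref{geom:u} is essential: applying estimate \eqref{gl_Sobolev_Graph} to $f$ and to $g'\in W^{1,2}(X_E)$ yields $\sum_{v}|\tr{v}(f)|^2<\infty$ and $\sum_{v}|\str{v}(g')|^2<\infty$ (and symmetrically with $f,g$ interchanged). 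Two applications of Cauchy-Schwarz---first inside each $\rrv$, then in $\bigoplus_v\rrv$---bound $\sum_{v}|\langle \tr{v}(f),\str{v}(g')\rangle|$ by $\big(\sum_v|\tr{v}(f)|^2\big)^{1/2}\big(\sum_v|\str{v}(g')|^2\big)^{1/2}<\infty$, and analogously for the second term. Thus the endpoint sum converges absolutely under either grouping, the edgewise and vertexwise sums coincide, and the identity follows.
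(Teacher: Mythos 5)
Your proposal is correct and follows essentially the same route as the paper: integration by parts twice on each edge, followed by a regrouping of the endpoint terms from edges to vertices, with the rearrangement justified by absolute convergence coming from the Sobolev trace estimate, which is where \eqref{geom:u} enters. The only (harmless) cosmetic differences are that you bound the boundary sum vertex-wise via \eqref{gl_Sobolev_Graph} and a double Cauchy--Schwarz, whereas the paper bounds it edge-wise by $W^{2,2}$-norms, and that you make explicit the vanishing of boundary terms at infinity on edges of infinite length, which the paper leaves implicit.
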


\begin{proof}
Performing integration by parts twice yields:
\begin{align*}
\langle f,-g'' \rangle_{L^2(X_E)}
&=\sum\limits_{e\in E} -f_e(x)\overline{g_e'(x)}|_0^{l(e)}- \langle f'(x),-g'(x)\rangle \dx\\
&=\sum\limits_{e\in E} -f_e(x)\overline{g_e'(x)}|_0^{l(e)}+ \sum\limits_{e\in E} f'_e(x)\overline{g_e(x)}|_0^{l(e)}
+\langle -f'',g\rangle.
\end{align*}
We want to rearrange the sums
\begin{equation}
\label{gl_rb_sum_abskonv}
\sum\limits_{e\in E} -f_e(x)\overline{g_e'(x)}|_0^{l(e)} +\sum\limits_{e\in E} f'_e(x)\overline{g_e(x)}|_0^{l(e)}
\end{equation}
to
\begin{equation*}
\sum\limits_{v \in V} \langle \tr{v}(f),\str{v}(g') \rangle -\sum\limits_{v \in V}\langle \str{v}(f'),\tr{v}(g) \rangle.
\end{equation*}
The absolute convergence of \eqref{gl_rb_sum_abskonv} can be derived from \eqref{gl_Sobolev_Kante} with $\ep=u$:
\begin{equation*}
|f_e(0)|^2 \leq \left(\frac{2}{u}+u\right)\|f_e \|^2_{W^{1,2}(I_e)}\leq   c \|f_e \|^2_{W^{2,2}(I_e)}.
\end{equation*}
With analog estimates for$|f_e(l(e))|^2 $, $|f'_e(0)|^2  $ and $|f'_e(l(e))|^2 $ we conclude
\begin{align*}
\sum\limits_{e\in E} |f_e(x)\overline{g'_e(x)}|_0^{l(e)} |
&\leq \sum\limits_{e\in E} \left( |f_e(l(e))||\overline{g'_e(l(e))}|+|f_e(0)||\overline{g'_e(0)}|  \right)\\
&\leq 2c\, \sum\limits_{e\in E} \|f_e\|_{W^{2,2}(I_e)}\|g_e\|_{W^{2,2}(I_e)}\\
&=c\,\left( \|f \|^2_{W^{2,2}(X_E)}+\|g\|^2_{W^{2,2}(X_E)}  \right)
<\infty.
\end{align*}
Given the absolute convergence of the sums in question, we can rearrange them according to ingoing derivatives  and in this way derive the desired assertion from \eqref{gl_rb_sum_abskonv}.
\end{proof}


Now we are able to prove our first main result.
\begin{satz}
\label{satz_1}
Let a metric graph $\Gamma$ with $\inf\limits_{e\in E} l(e)\geq \ukl>0$ and boundary conditions of the form \eqref{RB:PL} be given. Then the negative Laplacian
\begin{align*}
\dom(H^{P,L})&=\{ f\in W^{2,2}(X_E) \with \tr{v}(f) \in \dom(L_v), \\
 & \hspace{4.5cm} L_v \tr{v}(f)=(1-P_v) \str{v}(f') \fa v\in V \},\\
H^{P,L}f&=-f''
\end{align*}
is self-adjoint.
\end{satz}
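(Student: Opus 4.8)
My plan is to establish self-adjointness in the two standard steps: first show that $H^{P,L}$ is symmetric, so that $H^{P,L}\subseteq (H^{P,L})^*$, and then prove the nontrivial inclusion $\dom((H^{P,L})^*)\subseteq\dom(H^{P,L})$. Both steps will be driven by the Green-type identity of Proposition \ref{hsatz_0}, which writes the defect between $\langle f,-g''\rangle$ and $\langle -f'',g\rangle$ as the absolutely convergent sum over vertices of the boundary terms $\langle\tr{v}(f),\str{v}(g')\rangle-\langle\str{v}(f'),\tr{v}(g)\rangle$.

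For symmetry I would take $f,g\in\dom(H^{P,L})$ and show each vertex term vanishes. Since $\tr{v}(f),\tr{v}(g)\in\dom(L_v)\subseteq(1-P_v)(\rrv)$ are fixed by $(1-P_v)$, orthogonality of $P_v$ lets me discard the $P_v$-part of $\str{v}(g')$ and replace $(1-P_v)\str{v}(g')$ by $L_v\tr{v}(g)$, and likewise $\str{v}(f')$ by $L_v\tr{v}(f)$. The vertex term becomes $\langle\tr{v}(f),L_v\tr{v}(g)\rangle-\langle L_v\tr{v}(f),\tr{v}(g)\rangle$, which is $0$ because $L_v$ is self-adjoint; summing over $v$ gives $\langle -f'',g\rangle=\langle f,-g''\rangle$.

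For the converse inclusion I would take $g\in\dom((H^{P,L})^*)$ with $(H^{P,L})^*g=h$. Testing against $C_0^\infty(X_E)\subseteq\dom(H^{P,L})$ (here all traces vanish, so the boundary conditions hold trivially) identifies $-g''=h$ distributionally on each edge, whence $g_e\in W^{2,2}(I_e)$; the uniform lower bound on the edge lengths then upgrades this to $g\in W^{2,2}(X_E)$ via the interpolation estimate $\sum_e\|g_e'\|^2\leq C\sum_e(\|g_e\|^2+\|g_e''\|^2)$ with uniform $C$, obtained from Lemma \ref{satz:Sobolev}. Now Proposition \ref{hsatz_0} applies to the pair $f,g$, and the defining property of the adjoint forces the boundary sum to vanish for every $f\in\dom(H^{P,L})$. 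Fixing a vertex $v_0$ and invoking Remark \ref{bem_rrv}(ii), I would construct, for arbitrary $x\in\dom(L_{v_0})$ and $z\in P_{v_0}(\rrv)$, a function $f\in W^{2,2}(X_E)$ supported near $v_0$ with $\tr{v_0}(f)=x$ and $\str{v_0}(f')=L_{v_0}x+z$; such $f$ satisfies the boundary conditions (at vertices $v\neq v_0$ because $f$ vanishes there, so $\tr{v}(f)=0\in\dom(L_v)$), hence $f\in\dom(H^{P,L})$, and only the $v_0$-term survives, giving $\langle x,\str{v_0}(g')\rangle-\langle L_{v_0}x+z,\tr{v_0}(g)\rangle=0$. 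Taking $x=0$ and letting $z$ range over $P_{v_0}(\rrv)$ yields $P_{v_0}\tr{v_0}(g)=0$; taking $z=0$ and using $x\in(1-P_{v_0})(\rrv)$ gives $\langle L_{v_0}x,\tr{v_0}(g)\rangle=\langle x,(1-P_{v_0})\str{v_0}(g')\rangle$ for all $x\in\dom(L_{v_0})$.

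The last identity is where the unboundedness of $L_{v_0}$ must be handled, and this is the step I expect to be the main obstacle: instead of manipulating $L_{v_0}$ directly, I would read the identity as the defining relation of the adjoint. It says exactly that $\tr{v_0}(g)\in\dom(L_{v_0}^*)$ with $L_{v_0}^*\tr{v_0}(g)=(1-P_{v_0})\str{v_0}(g')$; since $L_{v_0}$ is self-adjoint this gives $\tr{v_0}(g)\in\dom(L_{v_0})$ and $L_{v_0}\tr{v_0}(g)=(1-P_{v_0})\str{v_0}(g')$, precisely the boundary condition \eqref{RB:PL} at $v_0$. As $v_0$ was arbitrary, $g\in\dom(H^{P,L})$, completing the proof. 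The two places demanding care are the regularity step $g\in W^{2,2}(X_E)$, which genuinely uses the uniform bound (otherwise neither the summability of $\|g_e'\|^2$ nor the applicability of Proposition \ref{hsatz_0} is guaranteed), and the verification that the constructed test functions indeed lie in $\dom(H^{P,L})$.
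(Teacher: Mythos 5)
Your proof is correct and follows essentially the same route as the paper's: symmetry via the Green identity of Proposition \ref{hsatz_0} together with self-adjointness of $L_v$ and orthogonality, then the adjoint inclusion via interior regularity plus localized test functions from Remark \ref{bem_rrv}(ii), reading the resulting identity as $\tr{v}(g)\in\dom(L_v^*)=\dom(L_v)$ with $L_v\tr{v}(g)=(1-P_v)\str{v}(g')$. Your explicit justification of the summability $\sum_{e}\|g_e'\|^2<\infty$ needed for $g\in W^{2,2}(X_E)$, using the uniform lower bound on edge lengths, is in fact slightly more careful than the paper, which passes over this point silently.
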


\begin{bem}  Note that $D (L_v)$ is dense in the kernel of $P_v$ by its  very definition. In particular, the following holds:
\begin{itemize}
\item The   requirement $\tr{v}(f)\in \dom(L_v)$ implies   $P_v \tr{v}(f)=0$ (which is the formulation usually discussed in   the literature).
\item If $\|L_v\|<\infty$ one has $P_v(\tr{v}(f))=0 \Leftrightarrow \tr{v}(f)\in \dom(L_v)$.
\end{itemize}
\end{bem}

\begin{proof}
\begin{itemize}
\item
First we show, that $H^{P,L}$ is symmetric. Let $f$, $g\in \dom(H^{P,L})$. From proposition \ref{hsatz_0} we derive:
\begin{equation}
\label{gl_HPL_symmetrisch}
\langle f,H^{P,L}g \rangle_{L^2(X_E)}-\langle H^{P,L}f,g\rangle
=\sum\limits_{v\in V} \langle \tr{v}(f),\str{v}(g') \rangle -\sum\limits_{v\in V}\langle \str{v}(f'),\tr{v}(g) \rangle.
\end{equation}
For $H^{P,L}$ to be symmetric both sums must add to zero, which is in particular satisfied, if
$\langle \tr{v}(f),\str{v}(g')\rangle $ and
$\langle \str{v}(f'),\tr{v}(g)\rangle $ are equal in each vertex.
By assumption we have $\tr{v}(f)=q_1 $ with $q_1\in \dom(L_v) $
and $L_v\tr{v}(f)=(1-P_v)\str{v}(f')$, i.\,e.  $\str{v}(f')=L_vq_1+p_1 $ with $p_1 \in P_v(\rrv)$.
With analog notation for g:  $\tr{v}(g)=q_2$ and $\str{v}(g')=L_vq_2+p_2$ we get:
\begin{align*}
\langle \tr{v}(f),\str{v}(g')\rangle-\langle \str{v}(f'),\tr{v}(g)\rangle&=\langle q_1,Lq_2+p_2 \rangle-\langle L_vq_1+p_1,q_2 \rangle\\
&=\langle q_1,L_v q_2\rangle-\langle L_vq_1,q_2 \rangle \\
&=0
\end{align*}
as $L_v$ is self-adjoint and scalar products of elements in orthogonal subspaces ($\langle p_\cdot ,q_\cdot\rangle$) are zero.

\item
Let $H^*={H^{P,L}}^*$ and $f\in \dom(H^*)$. Then there exists a function $ h\in L^2(X_E)$ with
\begin{equation}
\label{gl_H_stern}
\langle H^{P,L}g,f \rangle =\langle g,h \rangle \fa g\in \dom(H^{P,L}).
\end{equation}

Thus $ f$ lies in $W^{2,2}(X_E)$, since the above equation is satisfied for each test function in $C_0^\infty(I_e)$ (which all lie in the domain of $H^{P,L}$) on each edge and the second weak derivative of $f$ is $-h$.

Thereby we are left to show the boundary condition of $H^{P,L}$ for $f$.
By proposition \ref{hsatz_0} and relation~\eqref{gl_H_stern} we get for all $g\in \dom(H^{P,L}) $
\begin{align*}
\langle g,h \rangle&=\langle H^{P,L}g,f \rangle=\langle-g'',f\rangle \\
&=\sum\limits_{v \in V} \langle \str{v}(g'),\tr{v}(f) \rangle -\sum\limits_{v \in V}\langle \tr{v}(g),\str{v}(f') \rangle+\langle g,-f''\rangle.
\end{align*}
Since $-f''=h $ we obtain for all $g \in \dom(H^{P,L})$
\begin{equation}
\label{gl_rb_terme0}
\sum\limits_{v \in V} \langle \str{v}(g'),\tr{v}(f) \rangle -\sum\limits_{v \in V}\langle \tr{v}(g),\str{v}(f') \rangle=0.
\end{equation}
If we pick functions $g$, with support in a small neighborhood of a vertex $v$ with radius smaller than $\ukl$, then we find
\begin{equation}
\label{gl_gl_kn_rb_terme0}
\langle \str{v}(g'),\tr{v}(f) \rangle =\langle \tr{v}(g),\str{v}(f') \rangle.
\end{equation}
For arbitrary $q_1\in \dom(L_v)$ and $p_1\in P_v(\rrv)$ there is a function $g\in W^{2,2}(X_E)$ with $\tr{v}(g)=q_1$ and $\str{v}(g')=L_vq_1+p_1$ by remark \ref{bem_rrv} (ii).
With relation \eqref{gl_gl_kn_rb_terme0} we get
\begin{align*}
\langle L_vq_1+p_1,\tr{v}(f) \rangle =\langle L_v q_1, \tr{v}(f) \rangle+\langle p_1,\tr{v}(f) \rangle = \langle q_1,\str{v}(f') \rangle.
\end{align*}
Choosing one of $q_1$ and $p_1$ equal to zero and the other arbitrary in the corresponding subspace, we get $P_v \tr{v}(f)=0$ and
$\langle q_1,L_v \tr{v}(f)\rangle=\langle L_v q_1,\tr{v}(f) \rangle = \langle q_1,\str{v}(f')\rangle$,
which yields $\tr{v}(f)\in \dom(L_v^*)=\dom(L_v)$ and $(1-P_v)\str{v}(f')=L_v \tr{v}(f)$.\qedhere
\end{itemize}
\end{proof}

\begin{bem}(Parametrization of self-adjoint Laplacians by vertex boundary conditions of the form \eqref{RB:PL})
The following converse of the statement of the theorem is also true:
For a metric graph $\Gamma$ with $\inf\limits_{e\in E}l(e)\geq \ukl >0$ and a self-adjoint negative Laplacian $H$ with boundary condition acting locally in each vertex, there exists a boundary condition of the form \eqref{RB:PL}, such that the domain of $H$ equals the domain of $H^{P,L}$ as given in the above theorem.
This result can be obtained using the theory of Lagrangian subspaces and boundary triplets and is illustrated in the appendix (see theorem \ref{satz_HG} and theorem \ref{satz_Hsa_LUR}).
In this sense we obtain a parametrization of all self-adjoint Laplacians with local boundary conditions by conditions of the form \eqref{RB:PL}.

The idea to use Lagrangian subspaces to find self-adjoint extensions of the minimal Laplace operator on metric graphs was firstly used in \cite{KostrykinS-99b}.
\end{bem}

\begin{bem}
\label{bem_1} The  proofs of proposition \ref{hsatz_0} and theorem \ref{satz_1} actually show  the following: Let $\Gamma$ be a metric graph with \eqref{geom:u} and $H^{P,L}$ a negative Laplacian with boundary condition of the form \eqref{RB:PL}.
Then the following representation
\begin{align*}
\langle H^{P,L}f,g\rangle
&=\sum\limits_{v\in V} \langle \str{v}(f'),\tr{v}(g)\rangle+\sum\limits_{e\in E}\int\limits_{I_e} f'_e(x)\overline{g'_e(x)}dx\\
&=\sum\limits_{v\in V} \langle L_v \tr{v}(f),\tr{v}(g)\rangle+\langle f',g' \rangle
\end{align*}
holds for all $f\in \dom(H^{P,L})$ and all $g\in \{h\in W^{1,2}(X_E) \text{ with } P_v\tr{v}(h)=0 \fa v\in V\}$.
\end{bem}


\begin{bem}
If $d_v<\infty$ in each vertex, then $L_v$ is bounded and the space $\rrv$ simplifies to
$\kz^{E_v}$.
For $d_v=\infty$ we get some changes compared to the finite case.
Examples and further discussion  follow in section \ref{abs_dv_unendl}.
\end{bem}


We will now turn to   essential  self-adjointness of the operators $H^{P,L}$ for metric graphs with bounded vertex degree. As a preparatory step  we prove the following lemma.

\begin{lem}
\label{hsatz_Cinf_W22}
For each function $f\in W^{2,2}(I)$, with $I$ an interval of the form $(0,l)$, $l\in \rz \cup \{\infty\}$, there exists a smooth function in $C^\infty(I)$ with the same boundary values (function and derivative) and an arbitrary small difference to $f$ in the $W^{2,2}$-norm.
\end{lem}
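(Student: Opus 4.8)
The plan is to prove the approximation statement in two stages: first reduce the problem to matching the boundary data with a smooth \emph{correction} function, and then approximate the remaining $W^{2,2}_0$-part by genuinely smooth functions. Concretely, given $f \in W^{2,2}(I)$ with $I = (0,l)$, I record the four boundary numbers $f(0), f'(0)$ and, when $l < \infty$, also $f(l), f'(l)$; these are well defined by the Sobolev embedding discussed before lemma \ref{satz:Sobolev}. I then fix an explicit smooth function $\varphi \in C^\infty(I)$ carrying exactly these boundary values. For finite $l$ a cubic (Hermite) polynomial does the job; for $l = \infty$ a function of the form $p(t)\,e^{-t}$ with $p$ a polynomial of degree at most one, or a compactly supported bump near $0$ matching $f(0), f'(0)$, supplies a smooth $\varphi \in W^{2,2}(I)$ with the prescribed data at the finite endpoint (and zero data ``at infinity''). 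The point is that $\varphi$ is smooth, lies in $W^{2,2}(I)$, and agrees with $f$ in all boundary values.

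Next I set $g := f - \varphi$. By construction $g \in W^{2,2}(I)$ and all boundary values of $g$ (function and derivative, at every finite endpoint) vanish, so $g \in W^{2,2}_0(I)$ in the sense used in the excerpt. The heart of the argument is then the density statement $C^\infty_0(I) \cap W^{2,2}(I)$ is dense in $W^{2,2}_0(I)$, which is precisely the definition of $W^{2,2}_0$ adopted in the paper (the closure of $C^\infty_0$ in the $W^{2,2}$-norm). Hence, for any $\ep > 0$, I may choose $\psi \in C^\infty_0(I)$ with $\|g - \psi\|_{W^{2,2}(I)} < \ep$. Because $\psi$ has compact support, all its boundary values vanish, so $\psi$ carries the same (zero) boundary data as $g$.

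Finally I assemble the approximant $h := \varphi + \psi \in C^\infty(I)$. Smoothness is clear since both summands are smooth. The boundary values of $h$ equal those of $\varphi$ plus those of $\psi$; since $\psi$ contributes zero, $h$ has exactly the boundary values of $\varphi$, which are by design the boundary values of $f$. Moreover $\|f - h\|_{W^{2,2}(I)} = \|g - \psi\|_{W^{2,2}(I)} < \ep$, giving the required smallness. This proves the lemma.

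The main obstacle, and the only point deserving care, is the construction of the interpolating function $\varphi$ in the \emph{unbounded} case $l = \infty$: one must exhibit a smooth $W^{2,2}(I)$-function matching prescribed values $f(0)$ and $f'(0)$ at the single finite endpoint while decaying fast enough to stay in $W^{2,2}(0,\infty)$. A product of a Hermite-type cubic cutoff with a smooth compactly supported bump, or an exponentially damped polynomial, resolves this without difficulty. The density of $C^\infty_0$ in $W^{2,2}_0$ is taken as the definition, so no separate mollification argument is strictly needed, though one could also realize $\psi$ explicitly by mollifying $g$ after extending it by zero; the compact-support bookkeeping is then the only routine check.
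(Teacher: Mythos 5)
Your proof is correct, but it takes a genuinely different route from the paper's. The paper works in the opposite order: it first approximates $f$ globally by some $\phi\in C^\infty(I)$ with $\|f-\phi\|_{W^{2,2}(I)}\leq\delta$ (density of smooth functions in $W^{2,2}$, theorem 3.17 in \cite{AdamsF}), then uses the trace estimate \eqref{gl_Sobolev_Kante} to see that the boundary defects of $f-\phi$ are of order $\delta$, and finally repairs them by an explicit cubic Hermite polynomial $p$ whose $W^{2,2}$-norm is bounded by $\delta\sqrt{q(l)}/l^2$; the approximant is $\phi+p$, and the smallness comes from a quantitative estimate on $p$. You instead fix the boundary data \emph{exactly} at the outset with a single smooth $W^{2,2}$-function $\varphi$, and reduce to approximating $g=f-\varphi$, whose boundary data vanish, by $C^\infty_0$-functions; smallness then comes entirely from the density of $C^\infty_0(I)$ in $W^{2,2}_0(I)$, so no polynomial norm estimates are needed --- that is what your approach buys. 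The cost is the inclusion $\{g\in W^{2,2}(I) \with g(0)=g'(0)=0,\ g(l)=g'(l)=0 \text{ if } l<\infty\}\subseteq W^{2,2}_0(I)$, which is \emph{not} ``precisely the definition'' as you write: the definition only gives density of $C^\infty_0$ in its own closure, and Sobolev embedding yields only the easy converse inclusion (closure $\subseteq$ vanishing boundary values). What you need is the nontrivial direction of the characterization of $W^{2,2}_0$ stated in section \ref{Metric}; this is a standard fact for intervals, but its usual proof (cut-off near the endpoints plus mollification) carries essentially the analytic content that the paper's explicit polynomial correction makes visible. Granting that stated characterization, your argument is complete, and your treatment of $l=\infty$ (an exponentially damped polynomial, or a compactly supported Hermite bump at the finite endpoint) is a sound counterpart to the paper's choice $c=d=0$.
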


\begin{proof}
Let $f\in W^{2,2}(0,l)$ and $\ep >0$. For each $\delta >0$ we can find a function $\phi \in C^\infty(0,l) $ with $\|f-\phi\|_{W^{2,2}(0,l)}\leq \delta $ (see e.\,g. theorem 3.17 in \cite{AdamsF}).
Then by \eqref{gl_Sobolev_Kante} we conclude for the boundary values of the difference $f-\phi$
\begin{equation*}
|f(0)-\phi(0)|^2\leq \left( \frac{2}{l} +l \right)\delta^2,\ |f'(0)-\phi'(0)|^2\leq \left( \frac{2}{l} +l \right)\delta^2,\ldots
\end{equation*}
With the notations $a=f(0)-\phi(0)$, $b=f'(0)-\phi'(0) $ and, if $l<\infty$, $c=f(l)-\phi(l)$, $d=f'(l)-\phi'(l)$
we construct the polynomial of degree three with $p(0)=a$, $p'(0)=b$, $p(l)=c $ and $p'(l)=d $:
\begin{equation*}
p(x)=a+bx-\left(\frac{2b+d}{l}+\frac{3(a-c)}{l^2}\right)x^2 + \left( \frac{2(a-c)}{l^3}+\frac{b+d}{l^2} \right)x^3.
\end{equation*}
For an edge with infinite length we chose $c=d=0$ and $l$ can be picked arbitrarily.
We can estimate the Sobolev-norm of $p$ by
\begin{equation*}
\|p(x)\|^2_{W^{2,2}(0,l)}\leq \delta^2 \frac{q(l)}{l^4},
\end{equation*}
where $q$ is a polynomial with non-negative coefficients, which do not depend on $\delta$.
If we pick $\delta$, s.\,t. $\delta + \delta \frac{\sqrt{q(l)}}{l^2}\leq \ep $,we get
\begin{align*}
\|f-(\phi+p)  \|_{W^{2,2}(0,l)} &\leq \|f-\phi \|_{W^{2,2}(0,l)}+\|p\|_{W^{2,2}(0,l)}\\
&\leq \delta +\delta \frac{\sqrt{q(l)}}{l^2}\leq \ep
\end{align*}
and $(\phi-p)$ is the desired function.
\end{proof}

\begin{satz}
Let $\Gamma$ be a metric graph with \eqref{geom:u}, bounded vertex degree and boundary condition of the form \eqref{RB:PL}.
The restriction of $H^{P,L} $ to $C^\infty\fin(P,L)=\{f\in \dom(H^{P,L})\cap C^\infty(X_E)  \text{ with } f_e \not\equiv 0$  only on finitely many edges$\}$ is essentially self-adjoint.
\end{satz}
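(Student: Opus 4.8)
The plan is to show that $C^\infty\fin(P,L)$ is a core for the self-adjoint operator $H^{P,L}$ of Theorem \ref{satz_1}. Since $H^{P,L}$ is self-adjoint and $C^\infty\fin(P,L)\subset\dom(H^{P,L})$, the closure of the restriction then coincides with $H^{P,L}$, which is exactly essential self-adjointness. Because $H^{P,L}f=-f''$, the graph norm $\|f\|_{L^2}^2+\|H^{P,L}f\|_{L^2}^2=\|f\|_{L^2}^2+\|f''\|_{L^2}^2$ is dominated by the $W^{2,2}(X_E)$-norm. Hence it suffices to approximate an arbitrary $f\in\dom(H^{P,L})$ in the \emph{stronger} $W^{2,2}(X_E)$-norm by elements of $C^\infty\fin(P,L)$; no reverse norm estimate is needed. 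I would do this in two steps: first reduce to functions supported on finitely many edges, then smooth edge by edge.

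\textbf{Step 1 (finite support).} Enumerate $V=\{v_1,v_2,\dots\}$ and set $V_n=\{v_1,\dots,v_n\}$. I multiply $f$ by a cutoff $\chi_n$ that is \emph{locally constant near every vertex}: $\chi_n\equiv 1$ on each edge both of whose endpoints lie in $V_n$, $\chi_n\equiv 0$ on each edge with no endpoint in $V_n$, and on an edge with exactly one endpoint in $V_n$ (including infinite edges) $\chi_n$ runs smoothly from $1$ near the $V_n$-endpoint to $0$ near the other end, the transition occurring strictly inside the edge, away from both vertices (possible since $l(e)\geq\ukl$). The crucial point is that $\chi_n$ is constant in a neighborhood of each vertex, so $\chi_n'=0$ there; consequently $\tr{v}(\chi_n f)=\tr{v}(f)$ and $\str{v}\bigl((\chi_n f)'\bigr)=\str{v}(f')$ at every kept vertex, while both vanish at every discarded vertex. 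Thus $\chi_n f$ satisfies the same vertex conditions as $f$ (resp. the trivial one), so $\chi_n f\in\dom(H^{P,L})$, and by bounded vertex degree its support meets only finitely many edges. Letting $V_n\uparrow V$ and placing the transition on the infinite edges far out, one gets $\|(1-\chi_n)f\|_{W^{2,2}(X_E)}\to 0$, since every finite edge is eventually interior and the tail $\sum_{e}\|f_e\|_{W^{2,2}(I_e)}^2$ is summable.

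\textbf{Step 2 (smoothing).} For such a finitely supported $\tilde f=\chi_n f\in\dom(H^{P,L})$, on each of the finitely many edges $e$ carrying $\tilde f$ I apply Lemma \ref{hsatz_Cinf_W22} to obtain $g_e\in C^\infty(I_e)$ with the \emph{same} boundary values (function and derivative) as $\tilde f_e$ and $\|g_e-\tilde f_e\|_{W^{2,2}(I_e)}$ arbitrarily small; on all remaining edges I put $g\equiv 0$. Because the boundary values are preserved edge by edge and the other edges stay zero, $\tr{v}(g)=\tr{v}(\tilde f)$ and $\str{v}(g')=\str{v}(\tilde f')$ for every $v$, so $g$ inherits all vertex conditions and lies in $C^\infty\fin(P,L)$. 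Summing the finitely many edgewise errors keeps $\|g-\tilde f\|_{W^{2,2}(X_E)}$ small.

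Combining the two steps produces, for every $f\in\dom(H^{P,L})$ and every $\ep>0$, an element of $C^\infty\fin(P,L)$ within $\ep$ in $W^{2,2}(X_E)$, hence within $\ep$ in the graph norm, which proves the claim. I expect the main obstacle to be Step 1: a naive truncation destroys the vertex conditions, and the entire idea is the locally constant cutoff, which keeps both $\chi_n$ and $\chi_n'$ constant near each vertex so that traces and signed derivatives—and therefore the boundary conditions—remain untouched. Bounded vertex degree is precisely what turns this vertexwise cutoff into a function supported on finitely many edges, and Lemma \ref{hsatz_Cinf_W22} is tailor-made for Step 2 because it smooths while fixing the boundary data on which the conditions \eqref{RB:PL} depend.
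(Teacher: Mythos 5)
Your proposal is correct and takes essentially the same approach as the paper: the paper likewise uses a cutoff that is identically $1$ near kept vertices and identically $0$ near discarded ones (so the \eqref{RB:PL} conditions survive the multiplication), invokes Lemma \ref{hsatz_Cinf_W22} for the edgewise smoothing with preserved boundary data, and controls the error through the summability of $\sum_{e}\|f_e\|^2_{W^{2,2}(I_e)}$. The only differences are cosmetic: the paper selects finitely many edges (rather than vertices) to define the kept region, performs the smoothing before applying the cutoff $\psi_n$, and verifies convergence of $f_n$ and $-f_n''$ in $L^2$ directly instead of passing through the dominating $W^{2,2}$-norm.
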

Here $C^\infty(X_E)$ stands for $=\{ (f_e)_{e\in E} \text{ with } f_e \in C^\infty(0,l(e))  \}$.
\begin{proof}
Let $f\in \dom(H^{P,L})$.
For each $n\in\nz$ pick the index $k(n)\in \nz$ s.\,t.
\begin{equation*}
\sum\limits_{j=1}^{k(n)} \| f_{e_j} \|^2_{W^{2,2}(I_{e_j})} \geq \|f \|^2_{W^{2,2}(X_E)}-\frac{1}{2n}.
\end{equation*}
Then:
\begin{equation}
\label{eq_wes_sa_1}
\sum\limits_{j=k(n)+1}^\infty \| f_{e_j} \|^2_{L^{2}(I_{e_j})}, \sum\limits_{j=k(n)+1}^\infty \| f'_{e_j} \|^2_{L^{2}(I_{e_j})},
\sum\limits_{j=k(n)+1}^\infty \| f''_{e_j} \|^2_{L^{2}(I_{e_j})} \leq \frac{1}{2n}.
\end{equation}

Let $V_{n}:=\left\{i(e_m)\text{ with } m\in \{1,2,\ldots k(n)\} \right\}\cup
\left\{j(e_m)\text{ with } m\in \dom(j)\cap \{1,2,\ldots k(n)\} \right\}$ and
$\psi(x) $ a smooth function mapping the interval $(0,\ukl)$ onto $[0,1]$, which is identically one in a neighborhood of zero and identically zero in a neighborhood $\ukl$.
Then we construct cut-off functions
\begin{align*}
\psi_n(x)=
\begin{cases}
  \begin{cases}
    \psi(x)  \text{ on }(0,\ukl)\\
    \equiv 0 \text{ on } [\ukl,l(e))
  \end{cases} & \text{on edges }e \text{ with } i(e)\in V_n \wedge j(e)\not\in V_n\\
  \begin{cases}
    \psi(l(e)-x)  \text{ on }(l(e)-\ukl,l(e))\\
    \equiv 0 \text{ on } (0,l(e)-\ukl]
  \end{cases} & \text{on edges }e \text{ with } j(e)\in V_n \wedge i(e)\not\in V_n\\
  \equiv 0 & \text{on edges }e \text{ with } i(e) \not\in V_n \wedge j(e)\not\in V_n\\
  \equiv 0 & \text{on edges }e \text{ with } i(e) \not\in V_n \wedge l(e)=\infty\\
  \equiv 1 & \text{on all remaining edges.}
\end{cases}
\end{align*}
From lemma \ref{hsatz_Cinf_W22} we take approximations $\varphi_{n,e}$ of $f_e$, s.\,t.
$\|f_e-\varphi_{n,e} \|^2_{W^{2,2}(I_e)}\leq \frac{1}{2n\cdot k(n)}$.
Then the sequence $f_n:=\varphi_n \psi_n $ consists of functions in $ C^\infty\fin(P,L) $, which by \eqref{eq_wes_sa_1} satisfy
\begin{equation*}
\|f_n-f \|^2_{L^{2}(X_E)}\leq \frac{1}{n}.
\end{equation*}
Thus $f_n \xrightarrow{L^2(X_E)} f$ and analogously $-\varphi_n'' \psi_n \xrightarrow{L^2(X_E)}-f'' $.

The function $f_n$ satisfies the same boundary condition as $\varphi_n$. With the uniform bound $c=\frac{2}{\ukl^2}$ of $\psi'$ and $\psi''$ we get with \eqref{eq_wes_sa_1}
\begin{align*}
\|\varphi'_n \psi_n'\|^2_{L^2(X_E)} &\leq \frac{c}{2n},\\
\|\varphi_n \psi_n''\|^2_{L^2(X_E)} &\leq \frac{c}{2n}
\end{align*}
and altogether
\begin{equation*}
H^{P,L}f_n=-f_n''=-\varphi_n\psi_n''-2\varphi_n'\psi'_n-\varphi_n''\psi_n \to -f''=H^{P,L}f. \qedhere
\end{equation*}
\end{proof}


\section{Boundary conditions at unbounded vertex degree}
\label{abs_dv_unendl}

In this section we give examples of operators on metric graphs with unbounded vertex degree.

Obviously Dirichlet ($P_v=Id$ and $L_v=0$) and Neumann ( $P_v=0$ and $L_v=0$) boundary conditions give self-adjoint operators as the boundary condition decouple the edges. In this two cases it is irrelevant if there are finitely ore infinitely many incident edges.

Other boundary conditions change dramatically the domain and properties of the corresponding Laplacian under infinite vertex degree: We will discuss this in various examples.

\medskip

\begin{bsp}
Functions in the domain of the Laplacian with $\delta$-type boundary conditions are continuous in the vertex and the sum of the ingoing derivatives is equal to $\alpha_v$ times the value of the function in the vertex with a real parameter $\alpha_v$:
\begin{equation*}
\sum\limits_{(e,t)\in E_v}\sgn(e,t)\cdot f'_e(t)=\alpha_v \cdot f_e(t) \qquad \text{for all }(e,t) \in E_v.
\end{equation*}
The special case $\alpha_v=0$ is called Kirchhoff or free boundary condition. We will now formalize this via the $(P,L)$  approach. Two cases have to be distinguished:

\begin{itemize}
 \item
For a vertex with finite degree the projection $P_v$ and the operator $L_v$ can be represented as
\begin{align*}
P_v=\frac{1}{d_v}\begin{pmatrix} d_v-1 & -1 & -1 & \ldots &  -1 \\ -1 & d_v-1 & -1 & \ldots & -1 \\-1 & -1 & d_v-1 & \ldots & -1 \\
\vdots &&& & \vdots \\-1 & -1 & -1 & \ldots & d_v-1  \end{pmatrix},
\qquad L_v=\frac{\alpha_v}{d_v}.
\end{align*}
\item
For a vertex with $d_v=\infty$ the continuity of the function gives $\tr{v}(f)\equiv 0$ (as $\tr{v}(f)\in \rrv$).
Therefore all $\delta$-type boundary conditions in such a vertex  are equal to  Kirchhoff boundary condition.
The attempt to construct a parameterization of the form \eqref{RB:PL} then  results in ${1-P_v=0}$. As will be discussed in the next theorem, this yields Dirichlet boundary condition.
\end{itemize}
\end{bsp}


\begin{satz}
\label{satz_kichhoff_unendlich}
Let $\Gamma$ be a metric graph with \eqref{geom:u} and at least one vertex $v_0\in V$ with infinite vertex degree. The negative Laplace operator $H_\mathrm{K}$ with Kichhoff boundary condition in $v_0$, and $P_v$, $L_v$ of the form \eqref{RB:PL} in all other vertices, is symmetric, but not self-adjoint. The operator is not closed and its closure has a Dirichlet boundary condition at $v_0$.
\end{satz}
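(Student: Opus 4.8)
The plan is to identify $H_{\mathrm K}$ as a \emph{proper symmetric restriction} of the self-adjoint Dirichlet operator at $v_0$, and then to show that the single extra constraint carried by Kirchhoff's condition fails to be closed, so that passing to the closure simply removes it. The decisive first observation is that at a vertex of infinite degree continuity is incompatible with square-summability: if $f$ is continuous at $v_0$ then $\tr{v_0}(f)$ is a constant sequence in $\rvv{v_0}$, hence $\tr{v_0}(f)=0$. Thus the Kirchhoff condition at $v_0$ amounts to $\tr{v_0}(f)=0$ together with the single scalar requirement $\sum_{(e,t)\in E_{v_0}}\sgn(e,t)\,f'_e(t)=0$. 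I would let $H_{\mathrm D}$ denote the operator $H^{P,L}$ obtained by replacing the data at $v_0$ with Dirichlet data ($P_{v_0}=\mathrm{Id}$, $L_{v_0}=0$) and leaving all other vertices unchanged; by theorem \ref{satz_1} the operator $H_{\mathrm D}$ is self-adjoint, its domain being exactly $\{f\in W^{2,2}(X_E): \tr{v_0}(f)=0, \ (P,L)\text{-conditions at } v\neq v_0\}$. Since every $f\in\dom(H_{\mathrm K})$ meets these conditions (and one more), we have $H_{\mathrm K}\subseteq H_{\mathrm D}$.

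Symmetry I would read off from proposition \ref{hsatz_0}: for $f,g\in\dom(H_{\mathrm K})$ the boundary terms at $v_0$ vanish outright because $\tr{v_0}(f)=\tr{v_0}(g)=0$, while at every other vertex they cancel exactly as in the proof of theorem \ref{satz_1} (self-adjointness of the $L_v$). To see that the inclusion is \emph{proper}, I would exhibit, via remark \ref{bem_rrv}(ii), a function $f\in\dom(H_{\mathrm D})$ supported near $v_0$ on a single incident edge, with vanishing boundary value and unit ingoing derivative; its Kirchhoff sum equals $1\neq 0$, so $f\notin\dom(H_{\mathrm K})$. Then $H_{\mathrm K}\subsetneq H_{\mathrm D}=H_{\mathrm D}^{*}\subseteq H_{\mathrm K}^{*}$, which already yields that $H_{\mathrm K}$ is symmetric but not self-adjoint.

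It remains to compute the closure. Since $H_{\mathrm D}$ is closed and $H_{\mathrm K}\subseteq H_{\mathrm D}$, certainly $\overline{H_{\mathrm K}}\subseteq H_{\mathrm D}$; the real content is the reverse inclusion, namely that every $f\in\dom(H_{\mathrm D})$ is a $W^{2,2}$-limit of functions satisfying the Kirchhoff constraint. Writing $d_k:=\str{v_0}(f')_{(e_k,\cdot)}$, so that $(d_k)\in\rvv{v_0}$, and $S_n:=\sum_{k=1}^{n}d_k$, I would modify $f$ only inside balls of radius $\ukl/2$ about $v_0$, using the continuous cubic interpolation of remark \ref{bem_rrv}(ii) which keeps the value $0$ at $v_0$ and prescribes the ingoing derivative: keep $d_k$ on the first $n$ edges, reset the derivative to $-S_n/M_n$ on the next $M_n$ edges, and taper it to $0$ on all further edges. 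The Kirchhoff sum of the resulting $f_n$ is then $S_n-M_n\cdot(S_n/M_n)=0$, and since the modifications live near $v_0$ alone they leave the $(P,L)$-conditions at the other vertices intact. The error is controlled by $\|f_n-f\|_{W^{2,2}}^2\le C\bigl(\sum_{k>n}|d_k|^2+|S_n|^2/M_n\bigr)$.

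The main obstacle is precisely this estimate, because the Kirchhoff sum $\sum_k d_k$ need not converge, so $S_n$ need not stay bounded and a correction localised on finitely many edges cannot be made small. The remedy is to spread the correction over a \emph{growing} block of edges: choosing $M_n\ge n\,|S_n|^2$ forces $|S_n|^2/M_n\le 1/n\to0$, while $\sum_{k>n}|d_k|^2\to0$ is just the $\ell^2$-tail. Hence $f_n\to f$ in $W^{2,2}$, and this convergence of $f_n$ and of $f_n''=-H_{\mathrm K}f_n$ in $L^2$ shows $f\in\dom(\overline{H_{\mathrm K}})$. Therefore $\dom(H_{\mathrm D})\subseteq\dom(\overline{H_{\mathrm K}})$ and $\overline{H_{\mathrm K}}=H_{\mathrm D}$. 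In particular $\dom(H_{\mathrm K})\subsetneq\dom(\overline{H_{\mathrm K}})$ shows $H_{\mathrm K}$ is not closed, and its closure carries a Dirichlet condition at $v_0$, as claimed.
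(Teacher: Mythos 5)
Your proposal is correct, but it reaches the conclusion by a genuinely different route than the paper. The paper computes the adjoint $H_{\mathrm K}^{*}$ directly: after showing that any $f\in\dom(H_{\mathrm K}^{*})$ lies in $W^{2,2}(X_E)$, it tests the boundary identity at $v_0$ with functions $g\in\dom(H_{\mathrm K})$ whose signed derivative traces are the vectors $(-1,1,0,\ldots)^{T}$, $(0,-1,1,0,\ldots)^{T}$, \ldots{} (admissible because their Kirchhoff sum vanishes and $\tr{v_0}(g)\equiv 0$), forcing $\tr{v_0}(f)=c\,(1,1,1,\ldots)^{T}$, which lies in $\rvv{v_0}$ only if $c=0$; together with the converse inclusion this yields $H_{\mathrm K}^{*}=H_{\mathrm D}$, and the closure is then obtained abstractly as $\overline{H_{\mathrm K}}=H_{\mathrm K}^{**}=H_{\mathrm D}^{*}=H_{\mathrm D}$. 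You instead settle symmetry and non-self-adjointness from the sandwich $H_{\mathrm K}\subsetneq H_{\mathrm D}=H_{\mathrm D}^{*}\subseteq H_{\mathrm K}^{*}$, and then prove $\overline{H_{\mathrm K}}=H_{\mathrm D}$ by hand, constructing for each $f\in\dom(H_{\mathrm D})$ Kirchhoff approximants via remark \ref{bem_rrv}(ii), with the correction $-S_n/M_n$ spread over $M_n\geq n\,|S_n|^{2}$ edges so that the $W^{2,2}$-error is at most $C\bigl(\sum_{k>n}|d_k|^{2}+|S_n|^{2}/M_n\bigr)\to 0$; this is a valid graph-norm approximation since the $W^{2,2}$-norm dominates the graph norm of the Laplacian, and the modifications stay within distance $\ukl/2$ of $v_0$, so the conditions at the other vertices are untouched. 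Both arguments use $d_{v_0}=\infty$ at the decisive moment, but differently: the paper uses that the only constant sequence in $\rvv{v_0}$ is zero (to kill the boundary freedom of the adjoint), while you use that a fixed scalar discrepancy $S_n$ can be distributed over arbitrarily many edges at arbitrarily small $\ell^{2}$-cost (to build approximants). The paper's route is shorter and identifies $H_{\mathrm K}^{*}$ explicitly; yours is more constructive, gives a quantitative rate, and makes transparent that the failure of closedness is precisely the unboundedness, with respect to the graph norm, of the single linear functional $f\mapsto\sum_{(e,t)\in E_{v_0}}\sgn(e,t)f'_e(t)$, whose kernel is therefore graph-norm dense in $\dom(H_{\mathrm D})$. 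Note that in your approach the identity $H_{\mathrm K}^{*}=H_{\mathrm D}$ is only recovered a posteriori as $\overline{H_{\mathrm K}}^{\,*}=H_{\mathrm D}^{*}=H_{\mathrm D}$, which is fine, since the theorem asserts nothing beyond symmetry, non-self-adjointness, and the form of the closure.
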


\begin{proof}
The domain of the operator is
\begin{align*}
&\dom(H_\mathrm{K})=\{ W^{2,2}(X_E) \text{ with } \  \tr{v_0}(f) \equiv 0 \text{ and } \sum\limits_{(e,t) \in E_{v_0}}\sgn(e,t)\, f'_e(t)=0,\\
&\hspace{2.5cm}\tr{v}(f)\in \dom(L_v),\ (1-P_v)\str{v}(f')=L_v\tr{v}(f) \text{ in all vertices }v\neq v_0\}.
\end{align*}
By \eqref{gl_HPL_symmetrisch} $H_\mathrm{K}$ is symmetric if
\begin{equation*}
\sum\limits_{v\in V} \langle \tr{v}(f),\str{v}(g') \rangle -\sum\limits_{v\in V}\langle \str{v}(f'),\tr{v}(g) \rangle=0.
\end{equation*}
This is true, since in all vertices $v\neq v_0$ we have chosen boundary conditions of the form \eqref{RB:PL} and in $v_0$ we have $\tr{v_0}(f)=\tr{v_0}(g)\equiv 0$.

Let $f\in \dom({H_\mathrm{K}}^*)$. Then we conclude---analog to the proof of theorem \ref{satz_1}---that $f\in W^{2,2}(X_E) $,  ${H_\mathrm{K}}^* f=-f'' $ and
equation \eqref{gl_rb_terme0} is satisfied  for all $g\in \dom(H_\mathrm{K})$.
In the vertex $v_0$ with  $\tr{v_0}(g)\equiv 0$ we conclude from \eqref{gl_gl_kn_rb_terme0}:
\begin{equation*}
 \langle  \str{v_0}(g'),\tr{v_0}(f) \rangle=0 \qquad \text{for all }g\in\dom(H_\mathrm{K}).
\end{equation*}

Using the vectors $(-1,1,0,0,\ldots )^T$, $(0,-1,1,0,\ldots )^T$, $(0,0,-1,1,0,\ldots )^T$ and so on for $\str{v_0}(g')$ provides $\tr{v_0}(f)=c\cdot (1,1,1,\ldots)^T$ and thus automatically $\tr{v_0}(f)\equiv 0 $ as $d_v=\infty$.
Hence we get $f\in \dom({H_\mathrm{K}}^*)\Rightarrow f\in W^{2,2}(X_E)\text{ and } \tr{v_0}(f)\equiv 0$.

Conversely, for all functions $f\in W^{2,2}(X_E) $ with $\tr{v_0}(f)\equiv0$, satisfying the given boundary conditions in all other vertices and for all $g\in \dom(H_\mathrm{K})$ we conclude from proposition \ref{hsatz_0}:
\begin{align*}
\langle H_\mathrm{K} g,f \rangle = \langle g,-f'' \rangle +\sum\limits_{v \in V} \langle \str{v}(g'),\tr{v}(f) \rangle -
\sum\limits_{v \in V} \langle \tr{v}(g),\str{v}(f') \rangle.
\end{align*}
As the scalar products in the sums are equal (as in the proof of theorem \ref{satz_1}) this results in:
\begin{equation*}
\langle H_\mathrm{K} g,f \rangle = \langle g,-f'' \rangle \qquad \text{for all }g\in \dom(H_\mathrm{K}).
\end{equation*}
In total we get
\begin{align*}
\dom({H_\mathrm{K}}^*)=\{f\in W^{2,2}(X_E)\text{ with } &\tr{v_0}(f)=0,\ \tr{v}(f)\in\dom(L_v),\\
&L_v\tr{v}(f)=(1-P_v)\str{v}(f') \text{ for all } v\neq v_0  \}.
\end{align*}
The operator ${H_\mathrm{K}}^* $ is equal to the operator $H_\mathrm{D}$ with Dirichlet boundary condition at $v_0$, which is self-adjoint and closed.
As there exists a function in $\dom(H_\mathrm{D})$, which doesn't lie in $\dom(H_\mathrm{K})$ (which follows directly from remark \ref{bem_rrv}) the operator is not self-adjoint.
Now we obtain:
\begin{equation*}
\overline{H_\mathrm{K}}={H_\mathrm{K}}^{**}=\left( H_\mathrm{D} \right)^*=H_\mathrm{D}.\qedhere
\end{equation*}
\end{proof}

\begin{bsp}
The $\delta'$ boundary conditions are defined by
\begin{equation*}
\sum\limits_{(e,t)\in E_v} f_e(t)=\alpha\cdot \sgn(e,t)\cdot f'_e(t) \qquad  \text{for all } (e,t)\in E_v.
\end{equation*}
This case can be treated similarly to the previous one.  In fact,
 in analogy the closure of an operator with $\delta'$-type boundary condition at a vertex with $d_v=\infty$ results in Neumann boundary conditions. 
The proof follows the same arguments as in the case of boundary conditions of $\delta$-type.
Only $\tr{v}$ and $\str{v}$ change roles.
We leave the details to the reader.
\end{bsp}

The previous examples with $d_v = \infty$ resulted in 'trivial' boundary conditions at the vertices with unbounded degree. However, there also  exist  non-trivial boundary conditions for metric graphs with a vertex with infinite vertex degree. This is discussed next.
\begin{bsp}
For a metric graph with \eqref{geom:u} and at least one vertex with $d_v=\infty $
we define the following boundary condition in this vertex:
\begin{align*}
Px&=0 \qquad \text{for all } x\in \ell^2(E_v,\kz),\\
Lx&=
\begin{pmatrix}
\frac{1}{2} & \frac{1}{4} & \frac{1}{8} & \ldots & \frac{1}{2^n} & \ldots \\
\frac{1}{4} \\
\frac{1}{8}\\
\ldots & & & 0\\
\frac{1}{2^n}\\
\ldots
\end{pmatrix}x.
\end{align*}
Then $L$ is bounded, since
\begin{align*}
\|Lx\|_{\ell^2(E_v,\kz)}&=\left(\left( \sum\limits_{k=1}^\infty \frac{1}{2^k}x_k \right)^2
+\sum\limits_{k=2}^\infty \left( \frac{1}{2^k}x_1 \right)^2 \right)^{\frac{1}{2}}\\
&\leq \left( \frac{1}{2}\|x\|^2 +\frac{1}{8}x_1^2 \right)^{\frac{1}{2}}\\
&\leq \|x\|.
\end{align*}
As $L$ is bounded and obviously symmetric it must be self-adjoint.

For arbitrary $x\in \rrv$ and $y:=Lx$ by remark \ref{bem_rrv} we find a function $f \in W^{2,2}(X_E)$ with $\tr{v}(f)=x$ and $\str{v}(f')=y$. This means we find functions $f$ in the domain of the operator with infinitely many non-trivial components in the boundary vectors $\tr{v}(f)$ and $\str{v}(f')$
(e.\,g. for $x_n=\frac{1}{n}$).
\end{bsp} 


\section{Quadratic form and lower bounded operator}\label{Quadratic}
In \cite{Kuchment-04} it was shown, that $H^{P,L}$ is self-adjoint by considering  the associated quadratic form under the restrictions $d_v<\infty $ and a uniform bound of $\|L_v\|$.
In this section we will characterize when $H^{P,L} $ from theorem  \ref{satz_1}
is lower bounded and explicitly compute  the associated quadratic form.  This will in particular show that this  form agrees with the form presented  in \cite{Kuchment-04} (provided the restrictions of that work are imposed).


\begin{satz}
\label{satz_2}
Let $\Gamma$ be a metric Graph with \eqref{geom:u}.
Let $H^{P,L}$ be the negative Laplace operator with boundary conditions of the form \eqref{RB:PL} as in theorem \ref{satz_1}.
The operator $H^{P,L}$ is lower bounded, iff $L_v^-$ is uniformly (lower) bounded, i.\,e. if the boundary conditions have the form \eqref{RB:PLS}.
\end{satz}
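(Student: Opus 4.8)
The plan is to reduce everything to the quadratic form identity contained in Remark~\ref{bem_1}. Taking $g=f$ there is legitimate because every $f\in\dom(H^{P,L})$ lies in $W^{1,2}(X_E)$ and satisfies $P_v\tr{v}(f)=0$ (as $\tr{v}(f)\in\dom(L_v)\subset\ker P_v$), so that
\[
\langle H^{P,L}f,f\rangle=\|f'\|^2+\sum_{v\in V}\langle L_v\tr{v}(f),\tr{v}(f)\rangle .
\]
Splitting $L_v=L_v^++L_v^-$ and using $\langle L_v^+x,x\rangle\geq 0$ shows that the sign of the whole expression is governed by the competition between the kinetic term $\|f'\|^2$ and the negative contribution $\sum_v\langle L_v^-\tr{v}(f),\tr{v}(f)\rangle$. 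Both implications will be read off from this single identity.

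For the \emph{sufficiency} (assume \eqref{S}) I would estimate $\langle L_v^-\tr{v}(f),\tr{v}(f)\rangle\geq -S\|\tr{v}(f)\|^2$, sum over $v$, and insert the uniform Sobolev trace bound \eqref{gl_Sobolev_Graph}, which \eqref{geom:u} makes available for every $0<\ep\leq\ukl$. This yields
\[
\langle H^{P,L}f,f\rangle\geq (1-2S\ep)\,\|f'\|^2-\tfrac{4S}{\ep}\,\|f\|^2 ,
\]
and choosing $\ep=\min\{\ukl,(2S)^{-1}\}$ renders the first coefficient non-negative and gives the lower bound $H^{P,L}\geq -4S/\ep$. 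This direction is essentially routine; the only thing to verify is that discarding $L_v^+$ is harmless.

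The substance lies in the \emph{converse}, which I would prove by contraposition, constructing test functions that push the Rayleigh quotient to $-\infty$. The model is the half-line, where the Robin condition $f'(0)=-\kappa f(0)$ has eigenfunction $e^{-\kappa s}$ with eigenvalue $-\kappa^2$. First I would translate the failure of \eqref{S} into spectral language: if $\langle L_v^-x,x\rangle<-S\|x\|^2$ for some $x$, then projecting onto the negative subspace gives $\inf\sigma(L_v)<-S$, so by Weyl's criterion there is a unit vector $z\in\dom(L_v)$ and a number $\kappa>S$ with $\|(L_v+\kappa)z\|$ as small as we wish. On each edge incident to $v$ I place the profile $z_{(e,t)}\,\psi(s)\,e^{-\kappa s}$, with $s$ the arclength from $v$ and $\psi$ a fixed cutoff supported in $[0,\ukl/2)$ and equal to $1$ near $0$. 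This function is localized at the single vertex $v$, belongs to $W^{2,2}(X_E)$ since $\|z\|=1$, and has $\tr{v}(f)=z$ together with $\|f'\|^2\approx\tfrac{\kappa}{2}$ and $\|f\|^2\approx\tfrac{1}{2\kappa}$ up to exponentially small errors.

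Two points remain, and here is the real obstacle. The exponential profile only meets the boundary condition approximately, giving $\str{v}(f')=-\kappa z$ whereas $\dom(H^{P,L})$ requires $\str{v}(f')=L_v z$; I would repair this with a correction $f^{\mathrm{corr}}$ supplied by Remark~\ref{bem_rrv}(ii) for the data $(0,(L_v+\kappa)z)$, whose $W^{2,2}$-norm is at most $c(\ukl)\,\|(L_v+\kappa)z\|$, and then choose the Weyl tolerance so small that $f^{\mathrm{corr}}$ disturbs neither numerator nor denominator of the quotient. The hypothesis \eqref{geom:u} is essential precisely to keep the construction constant $c(\ukl)$ uniform over all vertices. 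With this in place the bookkeeping gives $\langle H^{P,L}f,f\rangle\approx\tfrac{\kappa}{2}-\kappa=-\tfrac{\kappa}{2}$ against $\|f\|^2\approx\tfrac{1}{2\kappa}$, so the quotient behaves like $-\kappa^2$; letting $\kappa\to\infty$ along a sequence of vertices shows $H^{P,L}$ is unbounded below, the desired contradiction, so that boundedness below forces \eqref{S}.
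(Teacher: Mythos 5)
Your proposal is correct, and while your sufficiency direction coincides with the paper's (discard $L_v^+$, apply \eqref{S} together with the trace estimate \eqref{gl_Sobolev_Graph}, choose $\ep=\min\{\ukl,(2S)^{-1}\}$), your necessity direction takes a genuinely different route. The paper argues directly rather than by contraposition: for an \emph{arbitrary} $x\in P_-(\dom(L_v))$ it constructs a test function satisfying the boundary condition \emph{exactly}, namely a piecewise linear profile (smoothed afterwards) which on a tiny initial interval $(0,\delta)$ has slope $(L_vx)_{(e,t)}$ and then runs linearly to zero within length $\ep/2$; the key trick is that choosing $\delta<\min\left\{\frac{\ep}{4},\frac{|a|}{|b|},\frac{|a|^2}{|b|^2}\right\}$ makes the possibly enormous derivative data $L_vx$ contribute negligibly to $\|f\|^2$ and $\|f'\|^2$, so that the localized version \eqref{gl_Hpl_nubeschr} of \eqref{gl_Lvp_1} yields the explicit uniform bound $\langle -L_v^-x,x\rangle\leq\left(2+40\,c+\frac{24}{\ukl}\right)\|x\|^2$, where $-c$ is the assumed lower bound of $H^{P,L}$. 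You instead pass to the contrapositive and invoke spectral theory: Weyl approximate eigenvectors $z$ for spectral points $-\kappa$ of $L_v$, the half-line Robin profile $e^{-\kappa s}$, and a correction supplied by remark \ref{bem_rrv}(ii) to repair the boundary condition, which the exponential satisfies only approximately. Your bookkeeping does go through as sketched, precisely because the correction has $W^{2,2}$-norm at most $c(\ukl)\,\|(L_v+\kappa)z\|$ with the Weyl tolerance at your disposal independently of $\kappa$, so both the numerator $\approx -\kappa/2$ and the denominator $\approx 1/(2\kappa)$ survive the perturbation. As for what each approach buys: the paper's argument is more elementary (no spectral theorem, no approximation-and-correction step) and fully quantitative, giving an explicit constant in \eqref{S} in terms of the lower bound of $H^{P,L}$; your argument needs more machinery, but it is conceptually transparent and exhibits the sharper quadratic relation---a spectral point of $L_v$ at $-\kappa$ forces $\inf\sigma(H^{P,L})\lesssim-\kappa^2$---which the paper's direct construction does not display.
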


\begin{proof}
\begin{itemize}
\item[$\Longrightarrow$:]
Let $H^{P,L}$ be lower bounded. With remark \ref{bem_1} it holds
\begin{align*}
\sum_{v \in V}\langle L_v\tr{v}(f),\tr{v}(f) \rangle_{\rrv} + \|f' \|_{L^2(X_E)}^2=\langle H^{P,L}f,f \rangle_{L^2(X_E)}  \geq -c\,\| f\|_{L^2(X_E)}^2
\end{align*}
for all $f\in \dom(H^{P,L})$, and with the decomposition of $L_v$:
\begin{equation}
\label{gl_Lvp_1}
\sum\limits_{v\in V}\langle L^-_v \tr{v}(f),\tr{v}(f)\rangle_{\rrv}
\geq -c\,\| f\|^2- \|f' \|^2 -\sum\limits_{v\in V}\langle L^+_v \tr{v}(f),\tr{v}(f)\rangle_{\rrv}.
\end{equation}

For some $\mathcal C\geq 0$ we have to show, that $L_v^-$ is uniformly bounded from below, i.\,e.
\begin{equation}
\label{gl_Lvp_2}
\fa v\in V :\qquad \langle L_v^- x,x \rangle_{\rrv} \geq -\mathcal C\, \|x \|_{\rrv} \qquad \fa x \in \dom(L_v).
\end{equation}

By decomposition of $L_v$ in positive and negative part we also get a decomposition of $x$  in $x=x_+ +x_- $ (with $x_\pm=P_\pm x$).
Thus it is sufficient to
prove \eqref{gl_Lvp_2} for all $x_- \in P_-(\dom(L_v))$.

For functions $f\in \dom(H^{P,L})$, supported only in a small neighborhood of one vertex $v\in V$ and $\tr{v}(f)\in P_-(\dom(L_v))$ we get from \eqref{gl_Lvp_1} with $L_v^+\tr{v}(f)=0$
\begin{equation}
\label{gl_Hpl_nubeschr}
\langle L^-_v \tr{v}(f),\tr{v}(f)\rangle  \geq \sum\limits_{e \sim v} \left( -c\,\| f_e\|^2_{L^2(I_e)}- \|f_e' \|^2_{L^2(I_e)}\right).
\end{equation}
For arbitrary $x\in P_-(\dom(L_v))$ we construct a function $f$ with the above properties, such that the norms $\|f\|_{L^2(X_E)}$, $\|f'\|_{L^2(X_E)}$ can easily be bounded by $\|x\| $.

We start by defining a constant $\ep:=\min\left\{ \frac{3}{5\,c},\ukl \right\}$.
For an edge starting in $v$ we set $f_e$ to be a piecewise linear function with
$f(0)=x_{(0,e)} $ and  $f'_e(0)=(L_v x)_{(0,e)}$ on the interval $(0,\delta)$ with $0<\delta < \frac{\ep}{4}$ and continue linearly, such that $f_e(\frac{\ep}{2})=0$. The function value stays zero afterwards.
On all edges with endpoint equal to $v$ we set an analogous function beginning at the endpoint.
On all edges not incident to $v$ the function is set equal to zero. Note that $f$ satisfies the boundary condition but only lies in $W^{1,2}(X_E)$.

For $a:=x_{(0,e)}$ and $b:=(L_v x)_{(0,e)}$ we chose $\delta$, such that $0<\delta<\min\left\{\frac{\ep}{4},\frac{|a|}{|b|},\frac{|a|^2}{|b|^2}  \right\}$ and estimate by direct calculation:
\begin{align*}
\|f\|^2_{L^2(X_E)}&\leq \frac{5}{3}\ep\,\|\tr{v}(f) \|^2, \qquad
\|f'\|^2_{L^2(X_E)}\leq \left(1+\frac{24}{\ep}\right)\|\tr{v}(f) \|^2.
\end{align*}
By smoothening $f$ in the non-differentiable points, we get $f\in W^{2,2}(X_E)$.
Then the last estimates together with \eqref{gl_Hpl_nubeschr} result in:
\begin{align*}
0\leq-\langle L^-_v \tr{v}(f),\tr{v}(f)\rangle  &\leq \sum\limits_{e \sim v} \left( c\,\| f\|^2_{L^2(I_e)}+ \|f' \|^2_{L^2(I_e)}\right)\\
&\leq \left(1+\frac{24}{\ep}+\frac{5}{3}\,c\,\ep\right)\, \| \tr{v}(f) \|^2\\
&\leq \left(2+40\,c+\frac{24}{\ukl}\right)\| \tr{v}(f) \|^2.
\end{align*}

\item[$\Longleftarrow$:]
Let $f\in \dom(H^{P,L})$. Then we get with remark \ref{bem_1}
\begin{align*}
\langle H^{P,L}f,f \rangle
&=\sum\limits_{v\in V} \langle L_v^+ \tr{v}(f),\tr{v}(f)\rangle+\sum\limits_{v\in V} \langle L_v^- \tr{v}(f),\tr{v}(f)\rangle+
\|f'\|^2_{L^2(X_E)}\\
&\geq \sum\limits_{v\in V} \langle L_v^- \tr{v}(f),\tr{v}(f)\rangle+\|f'\|^2_{L^2(X_E)}\\
&\geq -S\sum\limits_{v\in V}  \| \tr{v}(f)\|^2+\|f'\|^2_{L^2(X_E)}.
\end{align*}
With relation \eqref{gl_Sobolev_Graph} this yields
\begin{equation*}
\langle H^{P,L}f,f \rangle \geq
-2S\left( \frac{2}{\ep}\|f\|^2_{L^2}(X_E)+\ep \|f'\|^2_{L^2}(X_E)\right)   +\|f'\|^2_{L^2(X_E)}
\end{equation*}
for $\ep\leq \ukl$. If we choose $\ep$ with $1-2S\ep \geq 0 $, we obtain
\begin{equation*}
\langle H^{P,L}f,f \rangle \geq
-\frac{4S}{\ep}\|f\|^2_{L^2(X_E)}.\qedhere
\end{equation*}
\end{itemize}
\end{proof}

With the last theorem we have shown, that each self-adjoint operator $H^{P,L} $ of theorem \ref{satz_1}, which satisfies boundary conditions of the form \eqref{RB:PLS}, has an associated lower bounded quadratic form. In the following we will explicitly compute  the associated  a form  and show that it coincides with the form stated in \cite{Kuchment-04}, provided the conditions there are satisfied.

\begin{defn}
Let $\Gamma$ be a metric graph with \eqref{geom:u}.
Let a boundary condition of the form \eqref{RB:PLS} be given.
The direct sum of the operators $L_v$ will be denoted with $L$, i.\,e.
$L=\bigoplus\limits_{v \in V} L_v$ with
\begin{equation*}\dom(L)=\left\{x=(x_{v})_{v \in V} \in \bigoplus\limits_{v\in V} \rrv \text{ with } x_{v}\in \dom(L_v) \text{ and }
\sum\limits_{v\in V}  \| L_v x_v \|^2 <\infty \right\}.
\end{equation*}
\end{defn}

\begin{bem}
\label{bem_s_L_besch}
\begin{itemize}
\item The operator $L$ is densely defined in $\bigoplus (1-P_v)(\rrv)$, self-adjoint, lower bounded and has an associated form, which we denote by $s_L$, $\dom(s_L)=\dom(L^\frac{1}{2})\subset \bigoplus (1-P_v)(\rrv)$. We have:
\begin{equation*}
s_L[x,y]=\langle Lx,y \rangle = \sum\limits_{v\in V} \langle L_v x_v,y_v \rangle \qquad \text{for all } x \in \dom(L), y\in \dom(s_L).
\end{equation*}
\item The operator $L$ and therefore also its form $s_L$, are lower bounded with the same lower bound $-S$.
\end{itemize}
\end{bem}


\begin{defn}
\label{def_form}
Let $\Gamma$ be a metric graph with \eqref{geom:u}. For each boundary condition of the form \eqref{RB:PLS} we define a quadratic form
$\form_L$ by
\begin{align*}
\dom(\form_L)&=\{ f\in W^{1,2}(X_E)\with \tr{}(f)\in \dom(s_L)\},\\
\form_L[f]&=\|f'\|^2_{L^2(X_E)}+s_L[\tr{}(f)].
\end{align*}
\end{defn}

\begin{bem}
\begin{itemize}
\item The sesquilinear form associated to $\form_L$ is given by
\begin{equation*}
\form_L[f,g]=\langle f',g'\rangle+s_L[\tr{}(f),\tr{}(g)] \qquad f,g \in \dom(\form_L).
\end{equation*}

\item For $x =\tr{}(f)\in \dom(L)$ a short calculation shows
 \begin{equation}\label{gl_s_L_L}
 \langle Lx,x \rangle = s_L[x]=\sum\limits_{v\in V} \langle L_v x_v,x_v \rangle,
 \end{equation}

\item
If  $L$ is bounded (or, equivalently, if $(L_v)$  are uniformly bounded) one obtains
$\dom(L)=\dom(s_L)=\bigoplus (1-P_v)(\rrv)$ and
\begin{align}
\langle Lx,x \rangle &= s_L[x]=\sum\limits_{v\in V} \langle L_v x_v,x_v \rangle, \\
\notag \form_L[f]&=\|f'\|^2_{L^2(X_E)}+\sum\limits_{v\in V} \langle L_v \tr{v}(f),\tr{v}(f) \rangle,\\
\notag \dom(\form_L)&=\{f\in W^{1,2}(X_E) \with P_v \tr{v}(f)=0 \}.
\end{align}
Thus, in this situation  we get the  quadratic form from \cite{Kuchment-04}.

\item Let $f \in \dom(H^{P,L})$. Then the properties of the boundary condition and the Sobolev inequality yield:
\begin{align*}
\sum\limits_{v\in V}\|L_v \tr{v}(f)\|^2 &=\sum\limits_{v\in V} \|(1-P_v) \str{v}(f')\|^2 \leq \sum\limits_{v\in V} \|\str{v}(f')\|^2\\
&\leq 2\left(\frac{2}{u}+u\right)\|f'\|^2_{W^{1,2}(X_E)}< \infty.
\end{align*}
This means
\begin{equation}
\label{gl_op_trace_inL}
f \in \dom(H^{P,L}) \Rightarrow \tr{}(f) \in \dom(L).
\end{equation}
\end{itemize}
\end{bem}
For later applications we prove the following proposition, which shows: The form domain is stable under multiplication with $ W^{1,\infty}$-functions which are continuous in the vertices.
\begin{hsatz}
\label{hsatz:form:komst_stetig}
Let $\Gamma$ be a metric graph with \eqref{geom:u} and a boundary condition of the form \eqref{RB:PLS} be given. If $\varphi\in W^{1,\infty}(X_E)$ is a function, which is continuous in all vertices, then ${\varphi f \in \dom(\form_L)}$ for all $f\in \dom(\form_L)$.
\end{hsatz}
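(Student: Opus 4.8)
The plan is to verify the two defining conditions of $\dom(\form_L)$ for $g:=\varphi f$, namely that $\varphi f\in W^{1,2}(X_E)$ and that $\tr{}(\varphi f)\in\dom(s_L)$. The first is routine: on each edge the product rule gives $(\varphi f)'_e=\varphi'_e f_e+\varphi_e f'_e$, and since $\varphi\in W^{1,\infty}(X_E)$ supplies uniform bounds $\|\varphi_e\|_{L^\infty},\|\varphi'_e\|_{L^\infty}\le\|\varphi\|_{W^{1,\infty}(X_E)}$ independent of $e$, I would estimate $\|\varphi f\|_{W^{1,2}(X_E)}\le C(\|\varphi\|_{W^{1,\infty}})\,\|f\|_{W^{1,2}(X_E)}<\infty$ by summing the edgewise bounds. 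Hence $\varphi f\in W^{1,2}(X_E)$.

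The decisive observation is that the continuity of $\varphi$ in the vertices collapses the trace into a scalar multiple. Writing $\varphi(v)$ for the common value that $\varphi$ attains on all ends $(e,t)\in E_v$, I get
\begin{equation*}
\tr{v}(\varphi f)=\big(\varphi_e(t)\,f_e(t)\big)_{(e,t)\in E_v}=\varphi(v)\,\tr{v}(f),
\end{equation*}
and by the Sobolev embedding $W^{1,\infty}(I_e)\hookrightarrow C^0$ the vertex values obey $|\varphi(v)|\le\|\varphi\|_{L^\infty(X_E)}$ uniformly in $v$. This is the only place where the continuity assumption enters, and it is essential: without it $\tr{v}(\varphi f)$ would mix different edge values of $\varphi$ and would no longer be a scalar multiple of $\tr{v}(f)$, so the simple commutation with $L_v$ used below would fail.

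It remains to show $\tr{}(\varphi f)=(\varphi(v)\tr{v}(f))_{v\in V}\in\dom(s_L)$. Since $L=\bigoplus_v L_v$, the form splits blockwise, $s_L[x]=\sum_v s_{L_v}[x_v]$ with $s_{L_v}$ the form of $L_v$, and membership $x\in\dom(s_L)$ is equivalent to finiteness of $\sum_v\big(s_{L_v}[x_v]+(S+1)\|x_v\|^2\big)$; each summand is non-negative because, by remark \ref{bem_s_L_besch} and \eqref{S}, $L_v\ge -S$ gives $s_{L_v}[x_v]+(S+1)\|x_v\|^2\ge\|x_v\|^2\ge 0$. Using the scalar homogeneity $s_{L_v}[\varphi(v)x_v]=|\varphi(v)|^2 s_{L_v}[x_v]$ together with $|\varphi(v)|^2\le\|\varphi\|_{L^\infty}^2$ and non-negativity of the summands, I obtain
\begin{equation*}
\sum_{v\in V}\Big(s_{L_v}[\varphi(v)\tr{v}(f)]+(S+1)\|\varphi(v)\tr{v}(f)\|^2\Big)\le\|\varphi\|_{L^\infty}^2\sum_{v\in V}\Big(s_{L_v}[\tr{v}(f)]+(S+1)\|\tr{v}(f)\|^2\Big)<\infty,
\end{equation*}
finiteness of the right-hand side being exactly the statement $\tr{}(f)\in\dom(s_L)$. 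Moreover $P_v\tr{v}(\varphi f)=\varphi(v)P_v\tr{v}(f)=0$, so $\tr{}(\varphi f)$ stays in $\bigoplus_v(1-P_v)(\rrv)$; hence $\tr{}(\varphi f)\in\dom(s_L)$ and $\varphi f\in\dom(\form_L)$. The main obstacle is precisely the non-positivity of $L$: one cannot pass through $L^{1/2}$ directly, and it is the detour via the shifted non-negative form, legitimized by the uniform lower bound from \eqref{S}, that lets the uniform bound on the scalars $\varphi(v)$ control the sum.
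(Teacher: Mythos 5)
Your proof is correct and takes essentially the same route as the paper's: both rest on the single key observation that continuity of $\varphi$ at a vertex collapses the trace of the product into a scalar multiple, $\tr{v}(\varphi f)=\varphi(v)\,\tr{v}(f)$. You are in fact more thorough than the paper, whose two-line proof only records the vertex-wise statement $\tr{v}(\varphi f)=c_v\,\tr{v}(f)\in\dom(L_v)$ and leaves implicit the global summability condition defining membership in $\dom(s_L)$, which you verify explicitly via the shifted non-negative form and the uniform bound $|\varphi(v)|\le\|\varphi\|_{L^\infty(X_E)}$.
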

\begin{proof}
Clearly $(\varphi f) \in W^{1,2}(X_E)$. The boundary values of the function $\varphi f$ satisfy $\tr{v}(\varphi f)=\tr{v}(\varphi)\tr{v}(f)=c_v\cdot \tr{v}(f)$ for some complex constant $c_v$. Therefore it holds $\tr{v}(\varphi f)=c_v\cdot  \tr{v}(f)\in \dom(L_v)$.
\end{proof}


\begin{hsatz}
Let $\Gamma$ be a metric graph with \eqref{geom:u} and $\form_L$ the quadratic form corresponding to a boundary condition of the form \eqref{RB:PLS}. Then $\form_L$ is a lower bounded, densely defined and closed form.
\end{hsatz}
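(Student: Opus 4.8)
The plan is to verify the three properties---lower boundedness, dense definition, and closedness---in that order, exploiting the structure $\form_L[f]=\|f'\|^2+s_L[\tr{}(f)]$ together with the uniform lower bound $-S$ on $s_L$ recorded in remark \ref{bem_s_L_besch}.

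For \emph{lower boundedness}, I would estimate $s_L[\tr{}(f)]\geq -S\sum_{v}\|\tr{v}(f)\|^2$ and then apply the global Sobolev estimate \eqref{gl_Sobolev_Graph}: choosing $\ep\leq\ukl$ one gets $\sum_v\|\tr{v}(f)\|^2\leq 2\bigl(\tfrac{2}{\ep}\|f\|^2_{L^2}+\ep\|f'\|^2_{L^2}\bigr)$, so that
\begin{equation*}
\form_L[f]\geq \bigl(1-2S\ep\bigr)\|f'\|^2_{L^2(X_E)}-\tfrac{4S}{\ep}\|f\|^2_{L^2(X_E)}.
\end{equation*}
Picking $\ep$ so small that $1-2S\ep\geq 0$ (and $\ep\leq\ukl$) yields $\form_L[f]\geq -\tfrac{4S}{\ep}\|f\|^2_{L^2}$, exactly as in the backward direction of theorem \ref{satz_2}; this is the lower bound.

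For \emph{dense definition}, I would show $\dom(\form_L)$ contains the compactly supported smooth functions, which are dense in $L^2(X_E)$. A function $f$ supported away from the vertices has $\tr{}(f)=0\in\dom(s_L)$, so $C^\infty_0(X_E)\subset\dom(\form_L)$, giving density.

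The main work is \emph{closedness}, i.e. completeness of $\dom(\form_L)$ in the form norm $\|f\|^2_{\form_L}:=\form_L[f]+(1+\tfrac{4S}{\ep})\|f\|^2_{L^2}$. Here I would take a form-Cauchy sequence $(f_n)$. By the lower bound the form norm controls $\|f_n\|^2_{W^{1,2}(X_E)}$, so $f_n\to f$ in $W^{1,2}(X_E)$ for some limit $f$; in particular $f'_n\to f'$ in $L^2$. By remark \ref{bem_rrv2}(i) the trace map $\tr{}:W^{1,2}(X_E)\to\bigoplus_v\rrv$ is continuous, whence $\tr{}(f_n)\to\tr{}(f)$ in $\bigoplus_v\rrv$. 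It remains to see that the boundary part is also Cauchy and that its limit is $s_L[\tr{}(f)]$: the difference $\form_L[f_n-f_m]-\|f'_n-f'_m\|^2=s_L[\tr{}(f_n)-\tr{}(f_m)]$ shows $(\tr{}(f_n))$ is Cauchy in the form norm of the closed form $s_L$, so by closedness of $s_L$ (remark \ref{bem_s_L_besch}) its limit lies in $\dom(s_L)$ and equals $\tr{}(f)$ by uniqueness of limits in $\bigoplus_v\rrv$. This gives $f\in\dom(\form_L)$ and $\form_L[f_n-f]\to 0$. The obstacle---and the step requiring care---is precisely this reduction of the boundary term to the closedness of $s_L$, since $\form_L$ does not directly inherit closedness from $s_L$: one must pair the continuity of the trace map (which identifies the $\bigoplus_v\rrv$-limit) with the closedness of $s_L$ (which supplies membership in $\dom(s_L)$) to conclude consistently.
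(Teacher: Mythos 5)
Your proposal is correct and follows essentially the same route as the paper: the same Sobolev-based lower bound, density via zero-trace functions, and closedness by showing the form norm dominates the $W^{1,2}$-norm, passing to the $W^{1,2}$-limit, and combining continuity of the trace map with closedness of $s_L$ to identify the boundary limit. The only point to tighten is that for the closedness step you need $\ep$ chosen with $1-2S\ep>0$ strictly (the paper takes $1-2S\ep\geq\tfrac12$), since $1-2S\ep\geq 0$ alone does not give control of $\|f'\|^2_{L^2}$ by the form norm.
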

\begin{proof}
Obviously $\dom(\form_L)$ is dense in $L^2(X_E)$.
From remark \ref{bem_s_L_besch} we know for all $f\in \dom(\form_L)$
\begin{align}
\notag s_L[\tr{}(f)] &\underset{\hphantom{\eqref{gl_Sobolev_Graph}}}{\geq} -S\|\tr{}(f)\|^2 = -S\sum\limits_{v\in V} \|\tr{v}(f)\|^2_{\rrv},\\
\label{gl_L_v_H1}
&\underset{\eqref{gl_Sobolev_Graph}}{\geq} -2S\left( \frac{2}{\ep} \|f\|^2_{L^2(X_E)}+\ep \|f'\|^2_{L^2(X_E)} \right).
\end{align}
respectively:
\begin{equation*}
2S\ep \|f'\|_{L^2(X_E)}^2  +s_L[\tr{}(f)]  \rangle \geq -\frac{4S}{\ep} \|f\|_{L^2(X_E)}^2.
\end{equation*}
By choosing $\ep=\min\{u,\frac{1}{2S} \}$ we get
\begin{equation*}
\form_L[f]=\|f'\|_{L^2(X_E)}^2  +s_L[\tr{}(f)] \geq -\frac{4S}{\ep} \|f\|_{L^2(X_E)}^2.
\end{equation*}
With the lower bound we can define the form-norm by
\begin{equation*}
\form_{L,\alpha}[f]:=\form_L[f]+\alpha \|f\|^2=\|f'\|^2_{L^2(X_E)}+s_L[\tr{}(f)] +\alpha \|f\|^2_{L^2(X_E)}\geq \|f\|^2_{L^2(X_E)},
\end{equation*}
where $\alpha>\frac{4S}{\ep}+1$.
For the closedness of the form we have to prove that $(\dom(\form_L),\sqrt{\form_{L,\alpha}})$ is complete.
We note, that $\sqrt{\form_{L,\alpha}}$ is not equivalent to the $W^{1,2}$-norm any more.
From \eqref{gl_L_v_H1} we get:
\begin{equation*}
\form_{L,\alpha}[f]=\|f'\|^2+s_L[\tr{}(f)]+\alpha\|f\|^2
\geq \left(-\frac{4S}{\ep}+\alpha \right)\|f\|^2_{L^2(X_E)}+\left(-2S\ep+1 \right)\|f'\|^2_{L^2(X_E)}
\end{equation*}
for all $\ep\leq u$. If we pick $\ep$, s.\,t. $-2S\ep+1 \geq \frac{1}{2} $:
\begin{equation*}
\form_{L,\alpha}[f]\geq \frac{1}{2}\|f\|^2_{L^2(X_E)}+\frac{1}{2}\|f'\|^2_{L^2(X_E)}=\frac{1}{2}\|f\|^2_{W^{1,2}(X_E)}.
\end{equation*}

Let $(f_n)$ be a $\sqrt{\form_{L,\alpha}}$-Cauchy sequence in $\dom(\form_{L,\alpha})$. Then $f_n$ converges in $W^{1,2}(X_E)$ in the corresponding norm to a function $f\in W^{1,2}(X_E)$, as this space is closed.
It remains to show $f_n \xrightarrow{\sqrt{\form_{L,\alpha}}} f$.
\begin{itemize}
 \item
From convergence of $(f_n)$ in $W^{1,2}(X_E)$ and \eqref{gl_Sobolev_Graph} we obtain
$\tr{}(f_n)\to\tr{}(f)$.
\item
$f_n$ is a $\sqrt{\form_{L,\alpha}}$-Cauchy sequence, i.\,e. we have $\sqrt{\form_{L,\alpha}[f_n-f_m]}\to 0$ for $n$, $m\to \infty$ and
\begin{align*}
\form_{L,\alpha}[f_n-f_m]&=\|f'_n-f'_m\|^2 +s_L[\tr{}(f_n-f_m)]+\alpha \|f_n-f_m\|^2.
\end{align*}
Obviously $\sqrt{s_L[\tr{}(f_n-f_m)]}$ also converges to zero for $n,\;m \to \infty$. Hence $(\tr{}(f_n))$ is a Cauchy sequence in the norm induced by $s_L$.
As $s_L$ is a closed form, $\tr{}(f_n)$ converges to some $x\in \dom(s_L)$:
\begin{equation*}
\sqrt{s_L[\tr{}(f_n)-x]+\alpha \| f_n-x \| }\to 0.
\end{equation*}
Since $\sqrt{s_{L,\alpha}[\cdot]}\geq \|\cdot \|$ holds, we have $\| \tr{}(f_n)-x\|\to 0 $.
From convergence of $\tr{}(f_n)$ to $\tr{}(f)$ in the  $\ell^2$-norm, we get $x=\tr{}(f)$ and thus $f\in \dom(\form_L)$.\qedhere
\end{itemize}
\end{proof}


\begin{satz}
\label{satz:form:op}
Let $\Gamma$ be a metric graph with \eqref{geom:u} and $\form_L$ the quadratic form corresponding to the boundary condition  \eqref{RB:PLS} in the sense of definition \ref{def_form}. Then the self-adjoint operator associated to $\form_L$ is given by the operator $H^{P,L}$ of theorem \ref{satz_1}.
\end{satz}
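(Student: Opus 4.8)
The plan is to prove that the self-adjoint operator associated with the closed, lower bounded, densely defined form $\form_L$ coincides with $H^{P,L}$ from Theorem~\ref{satz_1}. I would exploit the following standard fact from form theory: if $H$ is a self-adjoint operator and $\form$ a closed lower bounded form, and if $H$ is the operator associated with some closed form $\form'$ extending the statement ``$\form[f,g]=\langle Hf,g\rangle$ for $f\in\dom(H)$, $g\in\dom(\form)$'', then it suffices to check that $H^{P,L}$ satisfies exactly this compatibility relation together with $\dom(H^{P,L})\subset\dom(\form_L)$, and that $\form_L$ is the form of \emph{a} self-adjoint operator (which is already guaranteed by the previously proved closedness). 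Since two self-adjoint operators that are associated with the same closed form are equal, the core of the argument reduces to a single identity.

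First I would verify the inclusion $\dom(H^{P,L})\subset\dom(\form_L)$. For $f\in\dom(H^{P,L})$ we have $f\in W^{2,2}(X_E)\subset W^{1,2}(X_E)$, and by the computation recorded in~\eqref{gl_op_trace_inL} we already know $\tr{}(f)\in\dom(L)\subset\dom(s_L)=\dom(\form_L)\cap(\ldots)$; hence $f\in\dom(\form_L)$ by Definition~\ref{def_form}. Second, and this is the decisive computation, I would show
\begin{equation*}
\form_L[f,g]=\langle H^{P,L}f,g\rangle_{L^2(X_E)}\qquad\text{for all } f\in\dom(H^{P,L}),\ g\in\dom(\form_L).
\end{equation*}
The left-hand side is $\langle f',g'\rangle+s_L[\tr{}(f),\tr{}(g)]$. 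Since $\tr{}(f)\in\dom(L)$ and $\tr{}(g)\in\dom(s_L)$, Remark~\ref{bem_s_L_besch} gives $s_L[\tr{}(f),\tr{}(g)]=\langle L\,\tr{}(f),\tr{}(g)\rangle=\sum_{v}\langle L_v\tr{v}(f),\tr{v}(g)\rangle$. This is precisely the shape of the boundary term appearing in Remark~\ref{bem_1}, whose representation formula states that for $f\in\dom(H^{P,L})$ and any $g\in W^{1,2}(X_E)$ with $P_v\tr{v}(g)=0$ one has $\langle H^{P,L}f,g\rangle=\sum_v\langle L_v\tr{v}(f),\tr{v}(g)\rangle+\langle f',g'\rangle$. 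Thus the identity follows \emph{directly} from Remark~\ref{bem_1}, provided every $g\in\dom(\form_L)$ satisfies $P_v\tr{v}(g)=0$ for all $v$.

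The main obstacle is exactly this last point: I must check that $g\in\dom(\form_L)$ genuinely forces $P_v\tr{v}(g)=0$. This holds because $\dom(\form_L)$ requires $\tr{}(g)\in\dom(s_L)=\dom(L^{1/2})$, and $L=\bigoplus_v L_v$ maps into $\bigoplus_v(1-P_v)(\rrv)$ with domain inside $\bigoplus_v(1-P_v)(\rrv)$; since $\dom(L^{1/2})\subset\overline{\dom(L)}\subset\bigoplus_v(1-P_v)(\rrv)$, every component $\tr{v}(g)$ lies in $(1-P_v)(\rrv)$, i.e.\ $P_v\tr{v}(g)=0$. With the compatibility identity established on a form core, the abstract representation theorem for closed semibounded forms yields a unique associated self-adjoint operator $\tilde H$ with $\dom(\tilde H)=\{f\in\dom(\form_L):\exists h,\ \form_L[f,g]=\langle h,g\rangle\ \forall g\in\dom(\form_L)\}$; the identity shows $H^{P,L}\subset\tilde H$, and since both $H^{P,L}$ (by Theorem~\ref{satz_1}) and $\tilde H$ are self-adjoint, the inclusion of self-adjoint operators forces $H^{P,L}=\tilde H$, completing the proof.
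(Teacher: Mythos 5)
Your proof is correct, and its first half coincides with the paper's: the inclusion $\dom(H^{P,L})\subset\dom(\form_L)$ via \eqref{gl_op_trace_inL}, the identification $s_L[\tr{}(f),\tr{}(g)]=\sum_v\langle L_v\tr{v}(f),\tr{v}(g)\rangle$ from remark \ref{bem_s_L_besch}, and the representation of remark \ref{bem_1} (your check that $\dom(s_L)\subset\bigoplus_v(1-P_v)(\rrv)$, so that every $g\in\dom(\form_L)$ satisfies $P_v\tr{v}(g)=0$ and remark \ref{bem_1} is applicable, is exactly the point that makes this step legitimate). Where you genuinely diverge is the converse inclusion. The paper proves $\dom(M_L)\subset\dom(H^{P,L})$ by hand: testing against $C_0^\infty$ functions on single edges to get $f\in W^{2,2}(X_E)$, then partial integration, an approximation argument with $\tr{}(g_n)=0$, and finally functions supported near a single vertex to recover $\tr{v}(f)\in\dom(L_v^*)=\dom(L_v)$ and $L_v\tr{v}(f)=(1-P_v)\str{v}(f')$. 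You avoid all of this by the abstract maximality argument: the compatibility identity gives $H^{P,L}\subset M_L$, and an inclusion of two self-adjoint operators forces equality (take adjoints: $M_L=M_L^*\subset(H^{P,L})^*=H^{P,L}$). Your route is shorter and is valid here because theorem \ref{satz_1} has already established self-adjointness of $H^{P,L}$ and the preceding proposition has established that $\form_L$ is closed (hence $M_L$ is self-adjoint); the paper's route costs more work but does not lean on the self-adjointness of $H^{P,L}$ in the converse direction---it extracts the boundary conditions directly from the form, which is precisely the Kuchment-style argument and explains why the paper can present it as an alternative, form-theoretic identification of the domain rather than a consequence of theorem \ref{satz_1} alone.
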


The proof uses the same arguments as the proof in \cite{Kuchment-04}, where it was assumed that
$d_v<\infty $ and $\|L_v\|\leq S $ in all vertices $v \in V$.
Additionally to those arguments we have to show $f\in \dom(H^{P,L})\Rightarrow \sum\limits_{v\in V} \|L_v \tr{v}(f) \|^2 <\infty $, which follows by relation \eqref{gl_op_trace_inL}.

\begin{proof}

We denote the associated operator of $\form_L$ by $M_L$.
\begin{itemize}
\item Let $f\in \dom(H^{P,L})$. Then $f\in W^{1,2}(X_E)$, $\tr{v}(f)\in \dom(L_v)$ and $f\in \dom(\form_L)$ by \eqref{gl_op_trace_inL}.
The representation of the domain of $M_L$ is
\begin{equation*}
\dom(M_L)=\{h \in \dom(\form_L) \with  \esgibt g\in L^2(X_E) \text{ with } \langle g,\phi \rangle = \form_L[h,\phi] \fa \phi \in \dom(\form_L)\}.
\end{equation*}
Setting $g=H^{P,L}f=-f''$ we find with remark \ref{bem_1} and \eqref{gl_s_L_L} that $f\in \dom(M_L)$.
\item
Let $f\in \dom(M_L)$, i.\,e.  for all $g\in \dom(\form_L)$:
\begin{equation*}
\label{gl_assozop}
 \form_L[f,g]=\langle M_L f,g \rangle=\langle f',g' \rangle+s_L[\tr{}(f),\tr{}(g)].
\end{equation*}
If we insert test functions $\phi$ with compact support contained in one edge (yielding $\tr{}(\phi)=0$) we can conclude $f\in W^{2,2}(X_E)$.
By partial integration we find
\begin{align}
\label{gl_form_assozop} \langle M_L f,g \rangle&=-\langle f'',g \rangle+s_L[\tr{}(f),\tr{}(g)]-\sum\limits_{v \in V} \langle \str{v}(f'),\tr{v}(g) \rangle
\end{align}
for all functions $g\in \dom(\form_L)$. For all those functions $g$ exists a sequence $g_n\in \prod C^\infty\komp(I_e)\cap L^2(X_E) $ with $g_n \to g $ in $L^2(X_E)$. As $(g-g_n)\in \dom(\form_L)$ we can insert $(g-g_n)$ in \eqref{gl_form_assozop}, where we find with $\tr{}(g_n)=0$:
\begin{align*}
 \langle M_L f,g-g_n \rangle+\langle f'',g-g_n \rangle=s_L[\tr{}(f),\tr{}(g)]-\sum\limits_{v \in V} \langle \str{v}(f'),\tr{v}(g) \rangle.
\end{align*}
From convergence of $g_n$ to $g$ we see, that both sides have to be equal to zero.
All functions $g\in W^{1,2}(X_E)$ with $\tr{v}(g)\in \dom(L_v) $, which are only supported in a small neighborhood of the vertex $v$ are elements of $\dom(s_L) $. Inserting these in the RHS of the last equation, we get
\begin{equation*}
\langle L_v \tr{v}(g),\tr{v}(f)\rangle =\langle \tr{v}(g),\str{v}(f')\rangle,
\end{equation*}
from where we conclude $\tr{v}(f)\in \dom(L_v^*)=\dom(L_v) $ and $L_v \tr{v}(f)=(1-P_v)\str{v}(f')$, as it is true in a dense subset of $(1-P_v)(\rrv)$.\qedhere
\end{itemize}
\end{proof}

\section{Remarks on the situation with edge lengths tending  to zero}
\label{ab:KGN}

The main thrust of the paper is to remove various boundedness conditions imposed on the literature on quantum graphs. In this vein we have discussed a setting in which neither boundedness of the operators $L$ nor finiteness of the vertex degree is necessary. 
The only remaining restriction for metric graphs is the uniform bound on the lower edge lengths, which is crucial for most of the methods used in this paper.  If this uniform bound is missing, the situation gets substantially harder. In fact, only very  few results on selfadjointness of the Laplacian are known in that case. These include the  trivial ones:
\begin{enumerate}
\item
If the Laplace operator is decoupled in each edge, e.\,g. by Dirichlet or Neumann boundary condition, then the operator is self-adjoint on each single edge and as $\ell^2$-sum of self-adjoint operators a self-adjoint operator on the whole graph.
\item
Trivial boundary conditions (called free or Kirchhoff boundary condition) give a self-adjoint operator on $\rz$ or $\rz^+$ with local point interactions (i.\,e. vertices with Kirchhoff b.\,c.).
\end{enumerate}
As for the non-trivial ones there are  a  few works which treat the Laplace operator on the real axis, half axis or subsets of the axis with local point interactions, i.\,e. a metric graph, which can be imbedded in the real line.
\begin{enumerate}
 \item[3.]
Buschmann, Stolz and Weidmann proved in \cite{BuschmannSW}: The
Laplace operator with $\delta'$ interactions with arbitrary coupling constants on a discrete set in $\rz$ is self-adjoint.
The boundary conditions are encoded by unitary operators (see e.\,g. \cite{Harmer-00} for encoding b.\,c. with unitary operators on metric graphs).
\item[4.]
In \cite{KostenkoM-10} Kostenko and Malamud treat Laplace operators with
$\delta$ or $\delta'$ boundary conditions on $\rz$ or a part of $\rz$ with local point interactions.
They give necessary and sufficient conditions for selfadjointness by relations between the coupling constants $\alpha_n$ and the edge lengths $l_n$.
\end{enumerate}
For more general metric graphs, there is only one work known to the authors.
\begin{enumerate}
\item[5.]
Let a non-closed metric graph be given with the property that the
completion is compact and for every element of the boundary 
each nonempty, open neighborhood has  infinite volume.
Then Carlson proved in \cite{Carlson-08} that there exists a unique self-adjoint version of the Laplacian with a certain version of Kirchhoff boundary conditions.
\end{enumerate}

\begin{appendix}

\section{Lagrangian subspaces and boundary triplets}
\label{ab:LUR}

In this section we state a one-to-one correspondence between the boundary conditions of the form \eqref{RB:PL} and Lagrangian subspaces via the theory of  boundary triplets. This allows  one to  find all (as opposed to 'all local')  self-adjoint versions of the Laplacian by using Lagrangian subspaces. In our context it is of interest, as it provided the connection of our work to corresponding considerations, see e.\,g. \cite{KostrykinS-99b,Harmer-00,Post,SSVW}, and gives a possibility to prove a converse to theorem \ref{satz_1}.

\begin{defn}
Let $(\gil,\langle \cdot, \cdot \rangle)$ be a Hilbert space.
\begin{enumerate}
\item
The mapping $ \Omega :(\gil\oplus \gil) \times (\gil \oplus \gil) \to \K$  with
\begin{equation*}
\Omega(x,y)=\Omega((x_1,x_2),(y_1,y_2))= \langle x_2,y_1 \rangle -\langle x_1,y_2 \rangle
\end{equation*}
is a hermitian symplectic form. Let $S$ be the mapping $S:\gil\oplus \gil \to \gil\oplus \gil$ with $S(x_1,x_2)=(x_2,-x_1)$.
Then we have $\Omega(x,y)=\langle Sx,y \rangle =\langle x_2,y_1 \rangle+ \langle -x_1,y_2\rangle$.

\item
A subspace $G$ of the direct sum $\gil \oplus \gil $ is called linear relation.
The subspace
\begin{equation*}
G^*:=\{(x_1,x_2)\in \gil\oplus\gil \text{ with } \langle x_1,y_2\rangle=\langle x_2,y_1 \rangle \text{ for all } (y_1,y_2)\in G\}
\end{equation*}
is called the adjoint relation to $G$. If $G\subset G^*$, then $G$ is called symmetric and for
$G=G^*$ self-adjoint.

With the symplectic form we get $G^*=\{ x \in \gil\oplus\gil \text{ with } \Omega(x,y)=0 \fa y\in G  \} $.
\item
In our context a linear relation $G\subset \gil\oplus\gil$ is called Lagrangian subspace, if $(SG)^\bot=G $ holds.
\end{enumerate}
\end{defn}
For a more general definition of Lagrangian subspaces see \cite{Everitt}.

\begin{bem}
\label{bem_lag_subsp}
Let $\gil$ be a Hilbert space and $G\subset \gil \oplus \gil$. Then:
\begin{enumerate}
\item
The set $G$ is a Lagrangian subspace, iff $G$ is a self-adjoint linear relation.
\item
The set $G$ is a Lagrangian subspace, iff there is an orthogonal projection $P$ on a closed subspace of $\gil$ and a self-adjoint operator $L$ in $(1-P)\,\gil$, such that
\begin{equation*}
G=\{(q,Lq+p) \in \gil\oplus \gil \text{ with } Pp=p \text{ and } q\in\dom(L) \}.
\end{equation*}
\end{enumerate}
\end{bem}
\begin{proof}
The first part is clear and the second was proven in \cite{Arens} as theorem 5.3.
\end{proof}
Note that the last characterization exactly describes the boundary conditions of the form \eqref{RB:PL} (where $q$ equals $\tr{v}(f)$ or $\tr{}(f)$ and $Lq+p$ equals $\str{v}(f')$ or $\str{}(f')$).

Another way of characterizing Lagrangian subspaces is by unitary operators---see \cite{Harmer-00} and \cite{KostrykinS-00} for application on quantum graphs.

\begin{defn}
Let $H$ be a symmetric operator in the Hilbert space $\hil$, $\gil$ another Hilbert space and $F_1$, $F_2 : \dom(H^*)\to \gil$ be two linear functions, such that $(F_1,F_2):\dom(H^*)\oplus \dom(H^*)\to \gil \oplus \gil$ is surjective.
The tuple $(F_1,F_2, \gil)$ is called boundary triplet if the following condition is satisfied
\begin{align}
\label{gl_bt}
\langle f,H^*g \rangle-\langle H^*f,g \rangle = \langle F_1 (f),F_2(g)\rangle-\langle F_2 (f),F_1(g)\rangle \qquad \text{for all } f,g\in \dom(H^*).
\end{align}
\end{defn}
For general theory of boundary triplets see for instance \cite{DM,BrueningGP-08} and the references therein.

\begin{bsp}
\label{bsp_bt}
Let $\Gamma$ be a metric graph with \eqref{geom:u}.
The minimal Laplacian $\Delta_\mathrm{min}$ is defined on $\left(\prod\limits_{e \in E}C^\infty_c (I_e)\right)\cap W^{2,2}(X_E)$. Its adjoint is the maximal Laplacian, which is defined on $W^{2,2}(X_E)$.
We set $\gil:=\bigoplus\limits_{v \in V} \rrv$, $F_1(f):=\tr{}(f)$ and $F_2(f):=\str{}(f')$ for all $f\in W^{2,2}(X_E)$, which are clearly linear and $(F_1,F_2)$ is surjective since remark \ref{bem_rrv2}.
Relation \eqref{gl_bt} follows from proposition \ref{hsatz_0}. Thus we have a boundary triplet  $(F_1,F_2,\gil)$ for the Laplacian.
\end{bsp}

Now we can apply the known result that all self-adjoint extensions of a symmetric operator can be found via a boundary triplet and Lagrangian subspaces (see e.\,g. theorem 1.12 in \cite{BrueningGP-08}) to quantum graphs.

\begin{satz}
\label{satz_HG}
Let $\Gamma$ be a metric graph with \eqref{geom:u}.
The operator $H_G$ is self-adjoint, iff $G$ is a Lagrangian subspace of $\bigoplus\limits_{v\in V} \rrv\times \bigoplus\limits_{v\in V} \rrv$.
Here $H_G$ is defined by
\begin{align*}
\dom(H_G)&=\{ W^{2,2}(X_E) \text{ with } (\tr{}(f),\str{}(f'))\in G \}\\
H_Gf &= -f''
\end{align*}
\end{satz}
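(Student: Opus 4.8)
The plan is to apply the abstract theorem on self-adjoint extensions via boundary triplets (theorem 1.12 in \cite{BrueningGP-08}) to the concrete boundary triplet constructed in Example \ref{bsp_bt}. That example already establishes that $(F_1, F_2, \gil)$ with $F_1(f) = \tr{}(f)$, $F_2(f) = \str{}(f')$ and $\gil = \bigoplus_{v\in V} \rrv$ is a boundary triplet for the maximal Laplacian $\Delta_{\max} = (\Delta_{\min})^*$; the Green-type identity \eqref{gl_bt} is verified there using Proposition \ref{hsatz_0}, and surjectivity of $(F_1, F_2)$ follows from Remark \ref{bem_rrv2}. The entire content of the theorem is then to unwind what the abstract result says in this specific situation.

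First I would recall the abstract statement: given a boundary triplet $(F_1, F_2, \gil)$ for a symmetric operator $H$, every self-adjoint extension $\widetilde H$ of $H$ (with $\widetilde H \subset H^*$) is of the form $\widetilde H = H^* \restriction \dom(\widetilde H)$ where $\dom(\widetilde H) = \{ f \in \dom(H^*) \with (F_1(f), F_2(f)) \in G\}$, and this extension is self-adjoint if and only if $G$ is a self-adjoint linear relation in $\gil \oplus \gil$. Substituting $F_1 = \tr{}$, $F_2 = \str{}(\cdot\,')$, and $H^* = \Delta_{\max}$ defined on $W^{2,2}(X_E)$, the domain described is exactly $\dom(H_G)$, and the action $H_G f = -f''$ is precisely the restriction of $\Delta_{\max}$. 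Thus $H_G$ is self-adjoint if and only if $G$ is a self-adjoint linear relation. The final step is to invoke part (1) of Remark \ref{bem_lag_subsp}, which states that $G$ is a self-adjoint linear relation exactly when $G$ is a Lagrangian subspace in the sense defined above. Chaining these two equivalences yields the claim.

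The main obstacle, such as it is, is not a difficult computation but rather a bookkeeping one: I must make sure the conventions in the cited abstract theorem match the ones used here. In particular the hermitian symplectic form and the notion of Lagrangian subspace ($G = (SG)^\perp$) fixed in the definition must be reconciled with the self-adjoint-linear-relation formulation used in the abstract theorem. This is precisely what Remark \ref{bem_lag_subsp}(1) handles, so I would lean on it rather than reprove the equivalence. I would also note explicitly that the bound \eqref{geom:u} enters only through Example \ref{bsp_bt} (to guarantee that the trace maps land in $\gil$ and that $(F_1,F_2)$ is surjective), so once that example is in place the present theorem is an immediate translation.

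\begin{proof}
By Example \ref{bsp_bt}, the triple $(F_1, F_2, \gil)$ with $F_1(f) = \tr{}(f)$, $F_2(f) = \str{}(f')$ and $\gil = \bigoplus_{v\in V}\rrv$ is a boundary triplet for the minimal Laplacian $\Delta_{\min}$, whose adjoint $\Delta_{\max}$ is defined on $W^{2,2}(X_E)$. By the standard extension theory via boundary triplets (e.\,g. theorem 1.12 in \cite{BrueningGP-08}), the restriction of $\Delta_{\max}$ to $\{ f \in W^{2,2}(X_E) \with (F_1(f), F_2(f)) \in G\}$ is self-adjoint if and only if $G$ is a self-adjoint linear relation in $\gil\oplus\gil$. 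Since this restriction is exactly $H_G$ by definition, and since by Remark \ref{bem_lag_subsp}(1) a subspace $G$ is a self-adjoint linear relation if and only if it is a Lagrangian subspace, we conclude that $H_G$ is self-adjoint if and only if $G$ is a Lagrangian subspace of $\bigoplus_{v\in V}\rrv \times \bigoplus_{v\in V}\rrv$.
\end{proof}
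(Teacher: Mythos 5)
Your proposal is correct and follows exactly the route the paper takes: the paper itself disposes of this theorem in one line, noting that with Example \ref{bsp_bt} (boundary triplet via Proposition \ref{hsatz_0} and Remark \ref{bem_rrv2}) it is a corollary of theorem 1.12 in \cite{BrueningGP-08}, with Remark \ref{bem_lag_subsp} supplying the identification of self-adjoint linear relations with Lagrangian subspaces. Your write-up merely makes these bookkeeping steps explicit, which is a faithful expansion rather than a different argument.
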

With example \ref{bsp_bt} this theorem is a corollary of theorem 1.12 in \cite{BrueningGP-08}, see also \cite{SSVW} for a notation with boundary conditions in the form \eqref{RB:PL} and characterizations of first derivative operators on metric graphs by boundary systems.

In the last theorem the Lagrangian subspace and thus the boundary condition doesn't see
the structure of the graph  (which means the Lagrangian subspace might not be decomposable corresponding to the vertex structure).

If we restrict the boundary condition to local interactions in each vertex, which is sometimes called vertex boundary conditions, we get the same result. This means the operator is self-adjoint if and only if in each vertex a Lagrangian subspace is chosen for the boundary condition.
\begin{satz}
\label{satz_Hsa_LUR}
Let $\Gamma$ be a metric graph with \eqref{geom:u}.
The operator $H_G$ is self-adjoint if and only if the boundary condition in each vertex is defined by a Lagrangian subspace.
Here $H_G$ is given by
\begin{align*}
\dom(H_G)&=\{f\in W^{2,2}(X_E) \text{ with }  (\tr{v}(f),\str{v}(f'))\in G_v \fa v\in V \},\\
H_G f&=-f''.
\end{align*}
\end{satz}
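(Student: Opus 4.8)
The plan is to reduce Theorem \ref{satz_Hsa_LUR} to Theorem \ref{satz_HG}, which already characterizes self-adjointness of $H_G$ in terms of $G$ being a Lagrangian subspace of the full trace space $\bigoplus_{v} \rrv \times \bigoplus_{v}\rrv$. The key observation is that a \emph{vertex} (i.e.\ local) boundary condition corresponds to a linear relation $G$ that respects the direct-sum decomposition over vertices, namely $G = \bigoplus_{v\in V} G_v$ with each $G_v \subset \rrv \times \rrv$. So the whole statement should follow once I show that such a decomposable relation $G$ is Lagrangian in the big space if and only if each component $G_v$ is Lagrangian in $\rrv\times \rrv$.

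First I would make precise the notion of a local (vertex) boundary condition: $f \in \dom(H_G)$ is constrained only through the conditions $(\tr{v}(f),\str{v}(f'))\in G_v$ imposed independently at each vertex, so that the admissible boundary data form the orthogonal direct sum $G=\bigoplus_v G_v$ inside $\bigl(\bigoplus_v \rrv\bigr)\times\bigl(\bigoplus_v \rrv\bigr)$ (after the natural reshuffling that identifies this product with $\bigoplus_v(\rrv\times\rrv)$). The hermitian symplectic form $\Omega$ and the operator $S$ from the appendix split as orthogonal direct sums over $v$ as well, since $\Omega$ is defined vertexwise by $\langle x_2,y_1\rangle - \langle x_1,y_2\rangle$ and the inner product on $\bigoplus_v\rrv$ decomposes over $v$. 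Consequently $SG = \bigoplus_v S_v G_v$ and, taking orthogonal complements componentwise, $(SG)^\perp = \bigoplus_v (S_v G_v)^\perp$.

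From this splitting the equivalence is immediate: $(SG)^\perp = G$ holds if and only if $(S_v G_v)^\perp = G_v$ for every $v$, i.e.\ $G$ is Lagrangian precisely when each $G_v$ is Lagrangian in $\rrv\times\rrv$. Combining this with Theorem \ref{satz_HG} (which gives self-adjointness of $H_G$ iff $G$ is Lagrangian) yields the claim. I would also invoke Remark \ref{bem_lag_subsp}(2) at this point to note that each Lagrangian $G_v$ is exactly of the form $\{(q,L_vq+p): P_vp=p,\ q\in\dom(L_v)\}$, so that the local Lagrangian data are nothing but the pairs $(P_v,L_v)$ of a boundary condition \eqref{RB:PL}, tying the statement back to the operator-theoretic description used throughout the paper.

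The main obstacle I anticipate is purely bookkeeping rather than conceptual: one must justify carefully that the closure/completeness issues in forming the orthogonal direct sum $\bigoplus_v G_v$ do not interfere with taking complements, i.e.\ that $\bigl(\bigoplus_v M_v\bigr)^\perp = \bigoplus_v M_v^\perp$ for a family of (not necessarily closed) subspaces $M_v$ of the Hilbert spaces $\rvv{v}\times\rvv{v}$. This requires that the orthogonal complement in the Hilbert-space direct sum indeed decomposes, which is true because any vector orthogonal to all of $\bigoplus_v M_v$ must, by testing against elements supported in a single vertex, be orthogonal to each $M_v$ separately, and conversely. Making this argument rigorous (and checking that no cross terms between different vertices appear in $\Omega$) is the only place demanding genuine care; everything else is a direct appeal to Theorem \ref{satz_HG} and Remark \ref{bem_lag_subsp}.
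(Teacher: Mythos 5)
Your proposal is correct, and it takes a genuinely different (more unified) route than the paper's own argument. The paper treats the two implications asymmetrically: for ``each $G_v$ Lagrangian $\Rightarrow H_G$ self-adjoint'' it uses point 2 of remark \ref{bem_lag_subsp} to rewrite each $G_v$ as a pair $(P_v,L_v)$, so that the vertex conditions become exactly \eqref{RB:PL}, and then quotes the direct self-adjointness theorem \ref{satz_1}; for the converse it cites theorem 1.5.7 of \cite{diss_11}, adding only that this direction ``also follows'' from theorem \ref{satz_HG} and the fact that direct products of Lagrangian subspaces are Lagrangian. You instead run both implications through theorem \ref{satz_HG}, reducing the whole statement to a single Hilbert-space lemma: for a relation of product form $G=\{(x,y) : (x_v,y_v)\in G_v \text{ for all } v\}$ one has $(SG)^\perp=G$ if and only if $(S_vG_v)^\perp=G_v$ for every $v$. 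Two points make this a real alternative rather than a restatement. First, the product fact the paper invokes literally gives only the ``if'' half of that lemma; the ``only if'' half, which is what the converse implication actually needs, is exactly your extension-by-zero argument (testing against data supported at a single vertex, using that $0\in G_u$ for $u\neq v$), and you supply it correctly. Second, your reading of $G$ as the full $\ell^2$ direct sum $\{(x,y):(x_v,y_v)\in G_v\ \forall v\}$, rather than the closure of the algebraic sum, is the right one: it is this set that matches $\dom(H_G)$ as defined in the theorem, and it makes the componentwise complement identity unproblematic, since orthogonal complements are insensitive to closures. As for what each approach buys: the paper's route for the first direction lands immediately on the $(P_v,L_v)$ parametrization---the central object of the paper---and never needs the converse decomposition statement; your route is self-contained, symmetric in the two directions, and requires nothing beyond theorem \ref{satz_HG} plus elementary symplectic linear algebra, with the contact to \eqref{RB:PL} restored only afterwards via remark \ref{bem_lag_subsp}.
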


\begin{proof}
One direction follows by remark \ref{bem_lag_subsp} point 2. and theorem \ref{satz_1}.
Both directions where directly proven in theorem 1.5.7 in \cite{diss_11}. The second direction also follows by theorem \ref{satz_HG} and the fact, that the direct product of Lagrangian subspaces is a Lagrangian subspace.
\end{proof}

\end{appendix}

\printbibliography
\bigskip
\noindent
\begin{tabular}{l}
Daniel Lenz\\
Fakult\"at f\"ur Mathematik und Informatik \\
Friedrich-Schiller-Universit\"at Jena \\
Ernst-Abbe-Platz 2, 07743 Jena, Germany\\
{\tt daniel.lenz@uni-jena.de}
\end{tabular}\\[3ex]
\begin{tabular}{lll}
Carsten Schubert & & and\\
Fakult\"at Mathematik & &  Fakult\"at f\"ur Mathematik und Informatik \\
Technische Universit\"at Chemnitz& &  Friedrich-Schiller-Universit\"at Jena \\
09107 Chemnitz, Germany  & & Ernst-Abbe-Platz 2, 07743 Jena, Germany\\
\end{tabular}\\
\begin{tabular}{l}
{\tt carsten.schubert@mathematik.tu-chemnitz.de}
\end{tabular}\\[3ex]
\begin{tabular}{l}
Ivan Veseli\'c\\
Fakult\"at Mathematik\\
Technische Universit\"at Chemnitz\\
09107 Chemnitz, Germany \\
{\tt ivan.veselic@mathematik.tu-chemnitz.de}
\end{tabular}\\[3ex]

\end{document}